\newcommand{\lyxaddress}[1]{
\par {\raggedright #1
\vspace{1.4em}
\noindent\par}
}
  \theoremstyle{definition}
  \newtheorem{defn}{\protect\definitionname}
  \theoremstyle{plain}
  \newtheorem{lem}{\protect\lemmaname}
\theoremstyle{plain}
\newtheorem{thm}{\protect\theoremname}
  \theoremstyle{plain}
  \newtheorem{conjecture}{\protect\conjecturename}
  \theoremstyle{plain}
  \newtheorem{cor}{\protect\corollaryname}
\date{}
\newcommand{\kb}[2]{|#1\rangle\langle#2|}
\newcommand{\bk}[2]{\langle#1|#2\rangle}
\newcommand{\ket}[1]{|#1\rangle}
\newcommand{\bra}[1]{\langle#1|}
\providecommand{\conjecturename}{Conjecture}
  \providecommand{\definitionname}{Definition}
  \providecommand{\lemmaname}{Lemma}
\providecommand{\corollaryname}{Corollary}
\providecommand{\theoremname}{Theorem}
\providecommand{\conjecturename}{Conjecture}
  \providecommand{\definitionname}{Definition}
  \providecommand{\lemmaname}{Lemma}
\providecommand{\corollaryname}{Corollary}
\providecommand{\theoremname}{Theorem}
  \providecommand{\conjecturename}{Conjecture}
  \providecommand{\definitionname}{Definition}
  \providecommand{\lemmaname}{Lemma}
\providecommand{\corollaryname}{Corollary}
\providecommand{\theoremname}{Theorem}
\begin{document}

\title{\textbf{\large{}{}Mutually Unbiased Product Bases for Multiple Qudits}{\large{}}}

\author{Daniel McNulty$^{1, 2}$, Bogdan Pammer$^{3}$ and Stefan Weigert$^{4}$}

\maketitle
\vspace{-1.4cm}
\begin{center}\date{17 March 2016}\end{center}

\lyxaddress{\begin{center}
\emph{\small{}{}$^{1}$Department of Optics, Palacký University,
17.~listopadu 12, 771~46 Olomouc, Czech Republic}{\small{}{}}\\
\emph{\small{}{}$^{2}$ Department of Applied Mathematics, Hanyang University (ERICA), 55 Hanyangdaehak-ro, Ansan, Gyeonggi-do, 426-791, Korea}{\small{}{}}\\
 \emph{\small{}{}$^{3}$Faculty of Physics, University of Vienna,
Boltzmanngasse 5, 1090 Vienna, Austria}{\small{}{}}\\
 \emph{\small{}{}$^{4}$Department of Mathematics, University
of York, York YO10 5DD, UK}{\small{} }
\par\end{center}}
\begin{abstract}
We investigate the interplay between mutual unbiasedness and product bases for multiple qudits of possibly different dimensions. A product state of such a system is shown to be mutually unbiased to a product basis only if each of its factors is mutually unbiased to all the states which occur in the corresponding factors of the product basis. This result implies both a tight limit on the number of mutually unbiased product bases which the system can support and a complete classification of mutually unbiased product bases for multiple qubits or qutrits. In addition, only maximally entangled states can be mutually unbiased to a maximal set of mutually unbiased product bases.
\end{abstract}
\global\long\def\kb#1#2{|#1\rangle\langle#2|}
 \global\long\def\bk#1#2{\langle#1|#2\rangle}
 \global\long\def\braket#1#2{\langle#1|#2\rangle}
 \global\long\def\ket#1{|#1\rangle}
 \global\long\def\bra#1{\langle#1|}
 \global\long\def\c#1{\mathbb{C}^{#1}}

\section{Introduction}

Complementarity is considered to be a fundamental concept of quantum
mechanics. Loosely speaking, two observables are complementary if
measuring one of them prevents an accurate simultaneous measurement
of the other. Position and momentum of a quantum particle, or two
spin components along different axes, are well-known examples. The
properties of complementary observables are crucial in the first
protocol of quantum key distribution \cite{bb84}.

Given a system residing in an eigenstate of one observable, the outcomes
of measuring a second observable are \emph{equally} likely if the
second observable is complementary to the first one. In other words,
the eigenbases of a complementary pair of observables are \emph{mutually
unbiased}. Explicitly, any two orthonormal bases $\{\ket{a_{i}}\}$
and $\{\ket{b_{j}}\}$ of dimension $d$ are mutually unbiased if and only if 
\begin{equation}
\left|\bk{a_{i}}{b_{j}}\right|^{2}=\frac{1}{d}\,,\qquad i,j=1\ldots d\,.
\end{equation}

For a qudit with Hilbert space of dimension $d$, the number of pairwise
MU bases is limited by $(d+1)$. The bound is tight \cite{ivanovic81,wootters89}
if the dimension equals the power of a prime number, $d=p^{n}$, $n\in\mathbb{N}$.
For other dimensions $d$, it is not known whether the maximum can
be reached. A proof is elusive even for the smallest case $d=6$,
although both rigorous results \cite{grassl04,jaming09,mcnulty2}
and substantial numerical evidence \cite{butterley07,brierley08}
support the conjectured maximum of \emph{three} MU bases.

In this paper we report a number of results which follow from the
assumption that the MU bases under consideration consist of \emph{product}
states only. Product bases do play an important role in the construction
of MU bases \cite{klappenecker04} if the dimension $d$ is not a
prime power. For instance, in bipartite dimensions $d=d_{1}d_{2}$,
MU bases can be built from the tensor products of sets of MU bases
in the subspaces $\c{d_{1}}$ and $\c{d_{2}}$. This construction
provides a lower bound on the number of MU bases in any composite
dimension, and it has been exceeded only in dimensions $d$ with specific
prime decompositions, using mutually orthogonal Latin squares \cite{wocjan05}.

Product bases also feature in complete sets of MU bases for prime
power dimensions $d=p^{n}$ since one can construct complete sets
of $(d+1)$ MU bases of which $(p+1)$ are product bases. Experimentally,
the distinction is important when implementing quantum information
tasks: product measurements on multiple qudits are easier to implement than entangled ones.

One of the main results of this paper is to show that, in a multipartite system,
the subsystem with the least number of MU bases severely restricts the
possibilities to construct MU product bases. This limitation allows
us to find a tight upper bound on the number of MU product bases for
multiple qudits, and to classify maximal sets of such bases.

The paper is set out as follows. In the next section we introduce
mutually unbiased product bases and prepare the ground by recalling
some results relevant in the present context. The third section contains
our first main result, a proof of a necessary and sufficient condition for
the construction of MU product bases in multipartite systems. In Sec.~\ref{sec:conjecture},
we derive a tight upper bound on the number of MU product bases in
a multipartite system with a subsystem of dimension two or three.
Applying this result, we then derive classifications of maximal sets
of MU product bases in a number of cases. In Sec.~\ref{sec:max_entangled_vectors}
we show that a vector mutually unbiased to a maximal set of MU product
bases must be maximally entangled with respect to a specific bipartition
of the system. A summary and some concluding remarks are presented
in Sec.~\ref{sec:conclusions}, along with a conjecture on the structure
of product bases in multipartite systems.

\section{Mutually unbiased product bases}

We start by defining product bases for a quantum system composed of
$n$ qudits, with dimension $d=d_{1}d_{2}\ldots d_{n}$. The state
space of the $r$-th qudit is the complex vector space $\mathbb{C}^{d_{r}}$,
with an integer $d_{r}\geq2$, $r=1\ldots n$. 
\begin{defn}
An orthonormal basis $\mathcal{B}$ of the complex vector space $\mathbb{C}^{d}$
with dimension $d=d_{1}d_{2}\ldots d_{n}$ is a \emph{product basis}
if each basis vector takes the form $\ket{\psi}=\ket{\psi^{1}}\otimes\ldots\otimes\ket{\psi^{n}}\in\mathbb{C}^{d}$,
with states $\ket{\psi^{r}}\in\mathbb{C}^{d_{r}}$, $r=1\ldots n$. 
\end{defn}
For a bipartite system with $d=d_{1}d_{2}$, two different types of
product bases exist, namely, \emph{direct} and \emph{indirect} product
bases, a distinction introduced in \cite{wiesniak11}. Direct product
bases consist of $d=d_{1}d_{2}$ states $\ket{v,V}\equiv\ket v\otimes\ket V$
where $\{\ket v,v=1\ldots d_{1}\}$ is an orthogonal basis of $\mathbb{C}^{d_{1}}$
and $\{\ket V,V=1\ldots d_2\}$ is an orthogonal basis of $\mathbb{C}^{d_{2}}$.

An important link between direct product bases and MU bases has been established
in \cite{wiesniak11}.
\begin{lem}
\label{lem:zeilinger}Two \emph{{[}}direct\emph{{]}} orthogonal product
bases $\{\ket{u,\mbox{U}}\}$ and $\{\ket{v,\mbox{V}}\}$ in dimension
$d=d_{1}d_{2}$ are \emph{MU} if and only if $\ket u$ is \emph{MU}
to $\ket v$ in dimension $d_{1}$ and $\ket V$ is \emph{MU} to $\ket{\mbox{U}}$
in dimension $d_{2}$. 
\end{lem}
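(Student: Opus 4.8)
The plan is to exploit the fact that, for direct product bases, the pairwise overlaps factorise over the two tensor factors. Writing the basis vectors as $\ket{u,U}=\ket u\otimes\ket U$ and $\ket{v,V}=\ket v\otimes\ket V$, with $\{\ket u\},\{\ket v\}$ orthonormal bases of $\mathbb{C}^{d_{1}}$ and $\{\ket U\},\{\ket V\}$ orthonormal bases of $\mathbb{C}^{d_{2}}$, one has
\begin{equation}
\left|\bk{u,U}{v,V}\right|^{2}=\left|\bk{u}{v}\right|^{2}\left|\bk{U}{V}\right|^{2}\, .
\end{equation}
Mutual unbiasedness of the two product bases is therefore equivalent to the system of equations $\left|\bk{u}{v}\right|^{2}\left|\bk{U}{V}\right|^{2}=1/(d_{1}d_{2})$ holding for every choice of labels $u,v,U,V$.

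From here the forward implication is immediate: if $\ket u$ is MU to $\ket v$ in $\mathbb{C}^{d_{1}}$ and $\ket U$ is MU to $\ket V$ in $\mathbb{C}^{d_{2}}$, then each subsystem overlap contributes the expected factor $1/d_{1}$ or $1/d_{2}$, and the product equals $1/(d_{1}d_{2})$.

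For the converse I would argue as follows. First observe that no subsystem overlap can vanish, since otherwise the left-hand side of the factorised equation would be zero while the right-hand side is not. Hence, fixing any single pair $U_{0},V_{0}$, the quantity $\left|\bk{u}{v}\right|^{2}=\big(d_{1}d_{2}\left|\bk{U_{0}}{V_{0}}\right|^{2}\big)^{-1}$ is a positive constant $c$ independent of $u$ and $v$. Summing over the $d_{1}$ labels $v$ and using completeness of $\{\ket v\}$, i.e. $\sum_{v}\left|\bk{u}{v}\right|^{2}=1$, forces $d_{1}c=1$, so $\left|\bk{u}{v}\right|^{2}=1/d_{1}$ for all $u,v$ and $\ket u$ is MU to $\ket v$. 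Substituting this back into the factorised equation gives $\left|\bk{U}{V}\right|^{2}=1/d_{2}$ for all $U,V$, so $\ket U$ is MU to $\ket V$ as well, which completes the argument.

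There is essentially no obstacle here; the only step deserving attention is the remark that all subsystem overlaps are strictly positive, which is what licenses dividing by $\left|\bk{U_{0}}{V_{0}}\right|^{2}$ and rules out degenerate configurations. The whole argument is symmetric under exchange of the two tensor factors, so one could equally well pin down the $\mathbb{C}^{d_{2}}$ overlaps first and deduce those in $\mathbb{C}^{d_{1}}$ afterwards.
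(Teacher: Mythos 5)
Your argument is correct. Note, however, that the paper does not actually prove this lemma --- it is quoted from \cite{wiesniak11} --- so the relevant comparison is with the proof the paper gives for its generalization, Lemma \ref{lem:pammer}, which covers arbitrary (including indirect) product bases. The forward direction is the same factorization in both. For the converse the routes genuinely differ: you fix a single pair $(U_{0},V_{0})$ in the second subsystem, observe that $\left|\bk{u}{v}\right|^{2}$ is then a constant $c$ over all $(u,v)$, and pin down $c=1/d_{1}$ by summing over $v$ and using completeness of $\{\ket v\}$. This is clean and elementary, but it leans essentially on the direct-product index structure: every overlap $\left|\bk{u}{v}\right|^{2}$ gets multiplied against every overlap $\left|\bk{U}{V}\right|^{2}$, which is precisely what fails for an indirect product basis, where each basis vector carries its own pair $(\ket{\psi_{i}^{1}},\ket{\psi_{i}^{2}})$ and the labels cannot be decoupled. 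The paper's proof of Lemma \ref{lem:pammer} instead evaluates $\mbox{tr}\left(\kb{\mu^{1}}{\mu^{1}}\otimes\mathbb{1}\right)$ and $\mbox{tr}\left(\mathbb{1}\otimes\kb{\mu^{2}}{\mu^{2}}\right)$ in the product basis and combines the resulting sum rules with $x_{i}^{2}y_{i}^{2}=1$ to force $x_{i}=y_{i}$; that argument survives the loss of the product index structure. In short, your proof is valid and arguably the most economical one for the direct case, and your remark that no subsystem overlap may vanish is exactly the point that needs flagging; the trace technique is what buys the extension to indirect product bases.
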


Any orthogonal basis consisting of product states only -- but not of the form described by a direct product basis -- is called an indirect product basis. It may involve more than one orthogonal basis in one subsystem. For example, the set ${\cal B}=\{\ket{0,0},\ket{0,1},\ket{1,+},\ket{1,-}\}$
is an indirect product basis in dimension $d=2\times2$ since it contains
two different orthogonal bases in the second subsystem, namely, $\{\ket 0,\ket 1\}$
and $\{\ket +,\ket -\}$, with $\ket{\pm}=(\ket 0\pm\ket 1)/\sqrt{2}$.
In general, an indirect product basis of a bipartite system takes
the form $\mathcal{B}=\{\ket{\psi_{i}^{1},\psi_{i}^{2}},i=1\ldots d\}$,
with two sets $\{\ket{\psi_{i}^{1}}\in\mathbb{C}^{d_{1}}\}$ and $\{\ket{\psi_{i}^{2}}\in\mathbb{C}^{d_{2}}\}$ of $d$ states each.

It is important to recognize that indirect product bases are not equivalent to direct product bases under local unitary transformations. Therefore, the generalization of Lemma \ref{lem:zeilinger} to arbitrary pairs of product bases is not immediate.

Indirect product bases of systems with dimensions four and six were
investigated in \cite{mcnulty_all}, leading to a generalization of
Lemma \ref{lem:zeilinger}.
\begin{lem}
\label{lem:mcnulty/weigert} A product state $\ket{\mu^{1},\mu^{2}}\in\mathbb{C}^{d},d\equiv d_{1}d_{2}\leq6,$
is\emph{ MU} to the product basis $\{\ket{\psi_{i}^{1},\psi_{i}^{2}},i=1\ldots d\}$,
if and only if $\ket{\mu^{1}}$ is \emph{MU} to $\ket{\psi_{i}^{1}}\in\mathbb{C}^{d_{1}}$
and $\ket{\mu^{2}}$ is \emph{MU} to $\ket{\psi_{i}^{2}}\in\mathbb{C}^{d_{2}}$,
for all $i=1\ldots d$. 
\end{lem}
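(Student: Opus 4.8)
The plan is to handle the two implications separately, with all the substance in the ``only if'' direction. The ``if'' direction is a one-line computation: after normalising each tensor factor, if $\ket{\mu^{1}}$ is MU to $\ket{\psi_{i}^{1}}$ in $\mathbb{C}^{d_{1}}$ and $\ket{\mu^{2}}$ is MU to $\ket{\psi_{i}^{2}}$ in $\mathbb{C}^{d_{2}}$, then $|\bk{\mu^{1},\mu^{2}}{\psi_{i}^{1},\psi_{i}^{2}}|^{2}=|\bk{\mu^{1}}{\psi_{i}^{1}}|^{2}\,|\bk{\mu^{2}}{\psi_{i}^{2}}|^{2}=(d_{1}d_{2})^{-1}=d^{-1}$ for every $i$, so $\ket{\mu^{1},\mu^{2}}$ is MU to the whole product basis. (For a direct product basis this is already the content of Lemma~\ref{lem:zeilinger}.)

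For the converse I would introduce the overlap probabilities $a_{i}:=|\bk{\mu^{1}}{\psi_{i}^{1}}|^{2}$ and $b_{i}:=|\bk{\mu^{2}}{\psi_{i}^{2}}|^{2}$ (again normalising each factor first). The hypothesis that $\ket{\mu^{1},\mu^{2}}$ is MU to the product basis says exactly that $a_{i}b_{i}=1/d$ for all $i=1\ldots d$; in particular every $a_{i}$ and every $b_{i}$ is strictly positive. The single extra fact I would feed in is the completeness relation of the product basis,
\[
\sum_{i=1}^{d}\kb{\psi_{i}^{1}}{\psi_{i}^{1}}\otimes\kb{\psi_{i}^{2}}{\psi_{i}^{2}}=\mathbb{1}_{d_{1}}\otimes\mathbb{1}_{d_{2}}\,,
\]
which holds simply because $\{\ket{\psi_{i}^{1},\psi_{i}^{2}}\}$ is an orthonormal basis of $\mathbb{C}^{d}$. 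Sandwiching the first tensor factor between $\bra{\mu^{1}}$ and $\ket{\mu^{1}}$ gives $\sum_{i}a_{i}\kb{\psi_{i}^{2}}{\psi_{i}^{2}}=\mathbb{1}_{d_{2}}$, and taking the trace (each $\ket{\psi_{i}^{2}}$ being a unit vector) yields $\sum_{i}a_{i}=d_{2}$; by symmetry $\sum_{i}b_{i}=d_{1}$, which together with $b_{i}=1/(d\,a_{i})$ gives $\sum_{i}1/a_{i}=d_{1}^{2}d_{2}$.

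The last step is a Cauchy--Schwarz squeeze. Applied to the tuples $(\sqrt{a_{i}})_{i}$ and $(1/\sqrt{a_{i}})_{i}$ it reads
\[
d^{2}=\Bigl(\sum_{i=1}^{d}1\Bigr)^{2}\leq\Bigl(\sum_{i=1}^{d}a_{i}\Bigr)\Bigl(\sum_{i=1}^{d}\frac{1}{a_{i}}\Bigr)=d_{2}\cdot d_{1}^{2}d_{2}=d^{2}\,,
\]
so equality holds throughout, forcing all the $a_{i}$ to be equal. Combined with $\sum_{i}a_{i}=d_{2}$ this pins them down to $a_{i}=1/d_{1}$, and hence $b_{i}=1/(d\,a_{i})=1/d_{2}$, for every $i$ --- which is precisely the statement that $\ket{\mu^{1}}$ is MU to each $\ket{\psi_{i}^{1}}\in\mathbb{C}^{d_{1}}$ and $\ket{\mu^{2}}$ is MU to each $\ket{\psi_{i}^{2}}\in\mathbb{C}^{d_{2}}$.

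The step I would expect to be the real obstacle is recognising that the converse can be carried through with so little input. A priori one might worry about having to use the off-diagonal orthogonality relations $\bk{\psi_{i}^{1}}{\psi_{j}^{1}}\bk{\psi_{i}^{2}}{\psi_{j}^{2}}=\delta_{ij}$ for $i\neq j$, or the explicit classification of product bases that is available only for $d=4$ and $d=6$; the point is that the diagonal completeness relation alone suffices, via the equality case of Cauchy--Schwarz. It is worth noting that nothing in the argument uses $d\leq6$, or even the restriction to two factors --- the analogous partial-trace identities hold for $n$ qudits --- which is exactly the strengthening the paper announces in the abstract and proves in full generality.
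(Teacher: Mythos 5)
Your proof is correct. Note that the paper itself does not reprove Lemma~\ref{lem:mcnulty/weigert} directly: it attributes the original proof to an exhaustive enumeration of all inequivalent product bases in dimensions four and six \cite{mcnulty_all}, and then supersedes that argument with Lemma~\ref{lem:pammer}, which establishes the same statement for arbitrary bipartite $d=d_1d_2$. Your argument is, in substance, that proof of Lemma~\ref{lem:pammer}: the only input beyond the hypothesis $a_ib_i=1/d$ is the pair of marginal identities $\sum_i a_i=d_2$ and $\sum_i b_i=d_1$ (which the paper obtains by evaluating $\mbox{tr}(\kb{\mu^1}{\mu^1}\otimes\mathbb{1})$ and $\mbox{tr}(\mathbb{1}\otimes\kb{\mu^2}{\mu^2})$ in the product basis --- the same computation as your sandwiched completeness relation), followed by an equality-case squeeze. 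The only cosmetic difference is the finish: the paper sets $x_i=\sqrt{d_1\,a_i}$, $y_i=\sqrt{d_2\,b_i}$ and deduces $\sum_i(x_i-y_i)^2=0$, whereas you apply Cauchy--Schwarz to $(\sqrt{a_i})$ and $(1/\sqrt{a_i})$; these are interchangeable. Your closing observation that nothing uses $d\leq 6$, nor even bipartiteness, is exactly the point of the paper's Lemma~\ref{lem:pammer} and Theorem~\ref{thm:generalised_pammer}, so you have effectively rediscovered the paper's own generalization rather than its enumeration proof of the lemma as stated.
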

This result is strong enough to imply a classification of \emph{all}
MU product bases in dimensions four and six \cite{mcnulty_all}. It
turns out that there is only one way to construct three MU product
bases in the space $\mathbb{C}^{4}$ while two inequivalent MU product
triples exist in $\mathbb{C}^{6}$. For multipartite systems with
dimensions $d>6$, the set of inequivalent product bases
is not known. The proof of Lemma \ref{lem:mcnulty/weigert} relies
on exhaustively enumerating all (inequivalent, cf. below) product
bases in dimensions four and six. The following section presents an
alternative approach which allows us to generalize Lemma \ref{lem:mcnulty/weigert}
to arbitrary multipartite dimensions.

\section{Limiting the number of MU product vectors}

In this section, we generalize Lemma \ref{lem:mcnulty/weigert} to
multipartite systems of dimension $d=d_{1}d_{2}\ldots d_{n}$, with
$d_{r}\geq2$, $r=1\ldots n$, leading to Theorem \ref{thm:generalised_pammer}. The theorem will be important to
construct maximal sets of MU product bases.

In a first step, we generalize Lemma \ref{lem:mcnulty/weigert} to
arbitrary \emph{bipartite} systems with dimension $d=d_{1}d_{2}$ \cite{pammer15}.
\begin{lem}
\label{lem:pammer} A product state $\ket{\mu^{1},\mu^{2}}$ in dimension
$d=d_{1}d_{2}$ is \emph{MU} to any product basis $\mathcal{B}=\{\ket{\psi_{i}^{1},\psi_{i}^{2}},i=1\ldots d\}$
if and only if $\ket{\mu^{1}}$ is \emph{MU} to all states $\ket{\psi_{i}^{1}}\in\mathbb{C}^{d_{1}}$
and $\ket{\mu^{2}}$ is mutually unbiased to all states $\ket{\psi_{i}^{2}}\in\mathbb{C}^{d_{2}}$.\end{lem}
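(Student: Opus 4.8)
The plan is to treat the two implications separately: the forward direction is a one-line computation, while the converse rests on a tight-frame identity obtained from a partial trace, combined with the equality case of the Cauchy--Schwarz inequality. For the easy direction, if $\ket{\mu^1}$ is MU to every $\ket{\psi_i^1}$ and $\ket{\mu^2}$ is MU to every $\ket{\psi_i^2}$, then $|\bk{\mu^1,\mu^2}{\psi_i^1,\psi_i^2}|^2 = |\bk{\mu^1}{\psi_i^1}|^2\,|\bk{\mu^2}{\psi_i^2}|^2 = (1/d_1)(1/d_2) = 1/d$ for every $i$, so $\ket{\mu^1,\mu^2}$ is MU to $\mathcal B$.

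For the converse, I would assume $\ket{\mu^1,\mu^2}$ is MU to $\mathcal B$ and, after rescaling the two factors of each basis vector, take all $\ket{\psi_i^1}$ and $\ket{\psi_i^2}$ to be unit vectors. Setting $a_i := |\bk{\mu^1}{\psi_i^1}|^2$ and $b_i := |\bk{\mu^2}{\psi_i^2}|^2$, the hypothesis reads $a_i b_i = 1/d$ for all $i$; in particular every $a_i$ and $b_i$ is strictly positive. The extra ingredient is the resolution of the identity $\sum_{i=1}^d \kb{\psi_i^1,\psi_i^2}{\psi_i^1,\psi_i^2} = \mathbf 1_{d_1}\otimes \mathbf 1_{d_2}$: tracing out the second factor yields $\sum_{i=1}^d \kb{\psi_i^1}{\psi_i^1} = d_2\,\mathbf 1_{d_1}$, and symmetrically $\sum_{i=1}^d \kb{\psi_i^2}{\psi_i^2} = d_1\,\mathbf 1_{d_2}$. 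Taking expectation values in $\ket{\mu^1}$ and $\ket{\mu^2}$ turns these into the sum rules $\sum_i a_i = d_2$ and $\sum_i b_i = d_1$.

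To finish, I would substitute $b_i = 1/(d\,a_i)$ into $\sum_i b_i = d_1$ to get $\sum_i 1/a_i = d\,d_1$, so that $\big(\sum_i a_i\big)\big(\sum_i 1/a_i\big) = d_2\cdot d\,d_1 = d^2$. Since Cauchy--Schwarz applied to the vectors with entries $\sqrt{a_i}$ and $1/\sqrt{a_i}$ gives $\big(\sum_i a_i\big)\big(\sum_i 1/a_i\big)\ge \big(\sum_i 1\big)^2 = d^2$, equality must hold, which forces all $a_i$ to coincide; the sum rule then pins $a_i = d_2/d = 1/d_1$ and hence $b_i = 1/d_2$ for every $i$. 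This is exactly the assertion that $\ket{\mu^1}$ is MU to each $\ket{\psi_i^1}$ and $\ket{\mu^2}$ to each $\ket{\psi_i^2}$.

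The step I expect to matter most is the passage to the two sum rules. Because $\mathcal B$ may be an \emph{indirect} product basis, the $d$ states $\ket{\psi_i^1}$ live in the $d_1$-dimensional space $\mathbb C^{d_1}$ and need not form an orthonormal basis, so one cannot argue subsystem by subsystem as in Lemma \ref{lem:zeilinger}; the partial-trace identity replaces the missing orthonormality with the weaker tight-frame property $\sum_i \kb{\psi_i^1}{\psi_i^1}\propto \mathbf 1_{d_1}$, which is precisely the input the Cauchy--Schwarz equality argument needs. The only other small point to keep straight is the rescaling of the factors to unit vectors, so that the numbers $a_i$ and $b_i$ genuinely record the transition probabilities entering the definition of mutual unbiasedness.
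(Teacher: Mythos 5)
Your proposal is correct and follows essentially the same route as the paper: both derive the sum rules $\sum_i|\bk{\psi_i^1}{\mu^1}|^2=d_2$ and $\sum_i|\bk{\psi_i^2}{\mu^2}|^2=d_1$ from the trace (partial-trace/tight-frame) identity and then exploit the equality case of an elementary inequality under the constraint $a_ib_i=1/d$. The only cosmetic difference is the final step, where you invoke equality in Cauchy--Schwarz for $(\sqrt{a_i})$ and $(1/\sqrt{a_i})$ while the paper writes the equivalent condition $\sum_i(x_i-y_i)^2=0$.
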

\begin{proof}
Assuming the relations $|\bk{\psi_{i}^{1}}{\mu^{1}}|^{2}=1/d_{1}$
and $|\bk{\psi_{i}^{2}}{\mu^{2}}|^{2}=1/d_{2}$, the state
$\ket{\mu^{1},\mu^{2}}$ is indeed found to be MU to the product states
of the basis $\mathcal{B}$,
\begin{equation}
\left|\bk{\psi_{i}^{1},\psi_{i}^{2}}{\mu^{1},\mu^{2}}\right|^{2}=\left|\bk{\psi_{i}^{1}}{\mu^{1}}\right|^{2}\left|\bk{\psi_{i}^{2}}{\mu^{2}}\right|^{2}=\frac{1}{d_{1}d_{2}}\,,\qquad i=1\ldots d\,.\label{eq: bipartite MU condition}
\end{equation}

To prove the converse we assume that $\ket{\mu^{1},\mu^{2}}$ is MU
to the states of the product basis $\mathcal{B}$, i.e. Eq. \eqref{eq: bipartite MU condition}. Let us now evaluate the traces of two projectors constructed from
the states $\ket{\mu^{1}}$ and $\ket{\mu^{2}}$, namely,
\begin{equation}\label{eq:trace1}
\begin{aligned}\mbox{tr }\left(\ket{\mu^{1}}\bra{\mu^{1}}\otimes\mathbb{1}\right) & =\sum_{i=1}^{d}\bra{\psi_{i}^{1},\psi_{i}^{2}}\left(\ket{\mu^{1}}\bra{\mu^{1}}\otimes\mathbb{1}\right)\ket{\psi_{i}^{1},\psi_{i}^{2}}\\
 & =\sum_{i=1}^{d}\left|\braket{\psi_{i}^{1}}{\mu^{1}}\right|^{2}=d_{2}\,,
\end{aligned}
\end{equation}
and
\begin{equation}\label{eq:trace2}
\mbox{tr }\left(\mathbb{1}\otimes\ket{\mu^{2}}\bra{\mu^{2}}\right)=\sum_{i=1}^{d}\left|\braket{\psi_{i}^{2}}{\mu^{2}}\right|^{2}=d_{1}\,.
\end{equation}
Defining the $2d$ positive numbers
\[
x_{i}=\sqrt{d_{1}}\left|\braket{\psi_{i}^{1}}{\mu^{1}}\right|\,,\qquad y_{i}=\sqrt{d_{2}}\left|\braket{\psi_{i}^{2}}{\mu^{2}}\right|\,,\qquad i=1\ldots d\,,
\]
the $(d+2)$ conditions (\ref{eq: bipartite MU condition})-(\ref{eq:trace2}) take the form 
\begin{equation}
x_{i}^{2}y_{i}^{2}=1\,,\qquad i=1\dots d\,,\label{eq: xy-product}
\end{equation}
and
\begin{equation}
\sum_{i=1}^{d}x_{i}^{2}=\sum_{i=1}^{d}y_{i}^{2}=d\,.
\end{equation}
These relations imply that
\begin{equation}
\sum_{i=1}^{d}\left(x_{i}-y_{i}\right)^{2}=0\,,
\end{equation}
which can only hold for 
\begin{equation}
x_{i}=y_{i}\,,\qquad i=1\ldots d\,.
\end{equation}
Using this result in Eq. \eqref{eq: xy-product} we see that indeed
\[
\left|\braket{\psi_{i}^{1}}{\mu^{1}}\right|^{2}=\frac{1}{d_{1}}\,,\qquad\left|\braket{\psi_{i}^{2}}{\mu^{2}}\right|^{2}=\frac{1}{d_{2}}\,,\qquad i=1\ldots d\,,
\]
must hold. Consequently, any state $\ket{\mu^{1},\mu^{2}}$ MU to
the states of a product basis ${\cal B}$ must have factors $\ket{\mu^{1}}$
and $\ket{\mu^{2}}$ which are MU to all states in the respective
subsystems. 
\end{proof}
The proof of Lemma \ref{lem:pammer} would be straightforward if we could transform any basis $\mathcal{B}$ to the canonical (direct) product basis by local unitary operations. However, this approach is not sufficiently general since no such transformations exist for indirect product bases $\mathcal{B}$.

Finally, we show that Lemma \ref{lem:pammer} can be generalized to obtain a result
about states MU to multi-partite orthogonal product bases. 
\begin{thm}
\label{thm:generalised_pammer} The product state $\ket{\mu^{1},\mu^{2},\ldots,\mu^{n}}$
in dimension $d=d_{1}d_{2}\ldots d_{n}$ is \emph{MU} to the orthogonal
product basis $\mathcal{B}=\{\ket{\psi_{i}^{1},\psi_{i}^{2},\ldots,\psi_{i}^{n}},i=1\ldots d\}$,
if and only if, for each $r=1\ldots n$, the state $\ket{\mu^{r}}$
is \emph{MU} to $\ket{\psi_{i}^{r}}\in\mathbb{C}^{d_{r}}$, for all
$i=1\ldots d$. \end{thm}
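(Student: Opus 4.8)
The ``if'' part is immediate: if $\left|\bk{\psi_i^r}{\mu^r}\right|^2 = 1/d_r$ for every $r=1\ldots n$ and every $i=1\ldots d$, then $\left|\bk{\psi_i^1,\ldots,\psi_i^n}{\mu^1,\ldots,\mu^n}\right|^2 = \prod_{r=1}^n \left|\bk{\psi_i^r}{\mu^r}\right|^2 = \prod_{r=1}^n (1/d_r) = 1/d$, so $\ket{\mu^1,\ldots,\mu^n}$ is MU to $\mathcal{B}$. My plan for the converse is to generalise the arithmetic--geometric-mean step used in the proof of Lemma \ref{lem:pammer} so that it handles all $n$ tensor factors at once, rather than iterating the bipartite lemma.

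Concretely, I would proceed as follows. First, factorise the hypothesis: $\ket{\mu^1,\ldots,\mu^n}$ being MU to $\mathcal{B}$ means $\prod_{r=1}^n \left|\bk{\psi_i^r}{\mu^r}\right|^2 = 1/d$ for each $i=1\ldots d$. Second, extract the analogues of Eqs.~\eqref{eq:trace1}--\eqref{eq:trace2}: for each fixed $r$, evaluating $\mathrm{tr}\,(\mathbb{1}\otimes\cdots\otimes\kb{\mu^r}{\mu^r}\otimes\cdots\otimes\mathbb{1})$ in the basis $\mathcal{B}$ and using $\bk{\psi_i^s}{\psi_i^s}=1$ for $s\neq r$ gives $\sum_{i=1}^d \left|\bk{\psi_i^r}{\mu^r}\right|^2 = d/d_r$. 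Third, define the nonnegative numbers $x_i^{(r)} = d_r\left|\bk{\psi_i^r}{\mu^r}\right|^2$; then $\prod_{r=1}^n x_i^{(r)} = 1$ for every $i$, while $\sum_{i=1}^d x_i^{(r)} = d$ for every $r$. Fourth, apply the AM--GM inequality to the $n$ numbers $x_i^{(1)},\ldots,x_i^{(n)}$ for each fixed $i$, obtaining $\sum_{r=1}^n x_i^{(r)} \geq n\left(\prod_{r=1}^n x_i^{(r)}\right)^{1/n} = n$, with equality if and only if $x_i^{(1)} = \cdots = x_i^{(n)}$.

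The crux is then to observe that these $d$ inequalities are all forced to be equalities: summing over $i$ gives $\sum_{i=1}^d \sum_{r=1}^n x_i^{(r)} = \sum_{r=1}^n \sum_{i=1}^d x_i^{(r)} = \sum_{r=1}^n d = nd$, which already saturates the bound $\sum_i \sum_r x_i^{(r)} \geq nd$. Hence $\sum_{r=1}^n x_i^{(r)} = n$ for every $i$, so by the equality case $x_i^{(1)} = \cdots = x_i^{(n)}$, and together with $\prod_r x_i^{(r)} = 1$ this yields $x_i^{(r)} = 1$, i.e. $\left|\bk{\psi_i^r}{\mu^r}\right|^2 = 1/d_r$ for all $r$ and $i$, as claimed. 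I do not expect a serious obstacle; the one point needing care is that, exactly as noted after Lemma \ref{lem:pammer}, one cannot first normalise $\mathcal{B}$ to a direct product basis by local unitaries, so the sum rules must come from the trace/resolution-of-identity computation in the second step. (An alternative route is induction on $n$, applying Lemma \ref{lem:pammer} to the split $\mathbb{C}^{d_1}\otimes\mathbb{C}^{d_2\cdots d_n}$; this works but requires first upgrading Lemma \ref{lem:pammer} to families of product states that resolve a multiple of the identity, since the induced factors in subsystems $2,\ldots,n$ form an overcomplete set of $d$ states rather than a basis — the direct AM--GM argument above sidesteps this detour.)
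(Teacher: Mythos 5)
Your proof is correct, but it takes a genuinely different route from the paper's. The paper proves Theorem \ref{thm:generalised_pammer} in two lines: for each $r$, regroup the factors as $\mathbb{C}^{d}=\mathbb{C}^{d_{r}}\otimes\mathbb{C}^{d_{\overline{r}}}$, observe that $\mathcal{B}$ is then an honest bipartite orthonormal product basis $\{\ket{\psi_{i}^{r},\psi_{i}^{\overline{r}}}\}$ and $\ket{\mu}$ an honest bipartite product state, and apply Lemma \ref{lem:pammer} verbatim. Your parenthetical worry about having to ``upgrade'' Lemma \ref{lem:pammer} to overcomplete families would be relevant only to a sequential induction that peels off one factor at a time and then tries to recurse inside the $d_{2}\cdots d_{n}$ block; it does not affect the paper's simultaneous-bipartition argument, since the collapsed vectors $\ket{\psi_{i}^{\overline{r}}}$ need not be orthonormal among themselves for Lemma \ref{lem:pammer} to apply --- only the full basis $\mathcal{B}$ must be orthonormal, which it is. Your direct argument instead redoes, for all $n$ factors at once, the computation underlying Lemma \ref{lem:pammer}: the trace identities $\sum_{i}|\bk{\psi_{i}^{r}}{\mu^{r}}|^{2}=d/d_{r}$ are obtained exactly as in Eqs.~\eqref{eq:trace1}--\eqref{eq:trace2}, and your $n$-variable AM--GM step (with equality forced by summing over $i$) is the natural generalization of the paper's $\sum_{i}(x_{i}-y_{i})^{2}=0$ trick, which is the $n=2$ case in disguise. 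All steps check out: the products $\prod_{r}x_{i}^{(r)}=1$ rule out vanishing overlaps, so the equality case of AM--GM legitimately gives $x_{i}^{(r)}=1$ for all $i,r$. What your version buys is a self-contained proof of the multipartite statement that does not invoke the bipartite lemma at all; what the paper's version buys is brevity, by making the bipartite case carry all the analytic weight.
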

\begin{proof}
To derive this result, we consider the basis $\mathcal{B}$ as a bipartite
product basis $\{\ket{\psi_{i}^{r},\psi{}_{i}^{\overline{r}}}\}$
of the space $\mathbb{C}^{d}=\mathbb{C}^{d_{r}}\otimes\mathbb{C}^{d_{\overline{r}}}$
, where now $\ket{\psi_{i}^{r}}\in\mathbb{C}^{d_{r}}$, $\ket{\psi_{i}^{\overline{r}}}\in\mathbb{C}^{d_{\overline{r}}}$,
with $d_{\overline{r}}=d/d_{r}$ and $r=1\ldots n$. Similarly, the
state $\ket{\mu^{1},\mu^{2},\ldots,\mu^{n}}$ is written as $\ket{\mu^{r},\mu^{\overline{r}}}$
where $\ket{\mu^{r}}\in\mathbb{C}^{d_{r}}$ and $\ket{\mu^{\overline{r}}}\in\mathbb{C}^{d_{\overline{r}}}$.
Applying Lemma \ref{lem:pammer} to each of the $n$ bipartitions,
we conclude that $\ket{\mu{}^{r}}$ is MU to $\ket{\psi_{i}^{r}}\in\mathbb{C}^{d_{r}}$,
for all $r=1\ldots n$. 
\end{proof}

\section{Limiting the number of MU product bases}

\label{sec:conjecture}

In this section we present a tight upper bound on the number of MU
product bases in multipartite systems with $d=d_{1}d_{2}\ldots d_{n}$
whenever at least one subsystem (which we can choose to be the first one)
has a dimension smaller than four, i.e. $d_{1}=2$ or $d_{1}=3$. 
\begin{thm}
\label{maximal_sets_d=00003D00003D2} Suppose $d=d_{1}d_{2}\ldots d_{n}$,
and let $d_{1}=2$ or $d_{1}=3$, and $d_{1}\leq d_{r},r=2\ldots n$.
Then there exist at most $(d_{1}+1)$ \emph{MU} product bases in $\mathbb{C}^{d}$. 
\end{thm}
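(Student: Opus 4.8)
The plan is to push the familiar operator‑counting bound for mutually unbiased bases of $\mathbb{C}^{d_1}$ down onto the first tensor factor of the product bases. Write $d_{\overline{1}}=d/d_1$, suppose $\mathcal{B}^{(1)},\ldots,\mathcal{B}^{(m)}$ are pairwise \emph{MU} product bases of $\mathbb{C}^{d}$, and regard each as a \emph{bipartite} product basis $\mathcal{B}^{(k)}=\{\ket{a_i^{(k)}}\otimes\ket{b_i^{(k)}}:i=1\ldots d\}$ of $\mathbb{C}^{d_1}\otimes\mathbb{C}^{d_{\overline{1}}}$, where $\ket{a_i^{(k)}}=\ket{\psi_i^{(k),1}}\in\mathbb{C}^{d_1}$ is the first factor of the $i$-th basis vector and $\ket{b_i^{(k)}}$ absorbs the remaining $n-1$ factors. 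I would first record three elementary facts. Since $\mathcal{B}^{(k)}$ spans $\mathbb{C}^{d}$, the first factors $\{\ket{a_i^{(k)}}:i=1\ldots d\}$ must span $\mathbb{C}^{d_1}$. Taking the partial trace over the second factor in $\sum_i\ket{a_i^{(k)},b_i^{(k)}}\bra{a_i^{(k)},b_i^{(k)}}=\mathbb{1}$ gives the frame identity $\sum_{i=1}^{d}\ket{a_i^{(k)}}\bra{a_i^{(k)}}=d_{\overline{1}}\,\mathbb{1}_{d_1}$. Finally, for $k\neq l$ each $\ket{\psi_j^{(l)}}\in\mathcal{B}^{(l)}$ is a product state \emph{MU} to the product basis $\mathcal{B}^{(k)}$, so Theorem \ref{thm:generalised_pammer} yields $|\bk{a_i^{(k)}}{a_j^{(l)}}|^{2}=1/d_1$ for all $i,j$.

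Next I would move to operators on $\mathbb{C}^{d_1}$. Put $F_{k,i}=\ket{a_i^{(k)}}\bra{a_i^{(k)}}-\tfrac{1}{d_1}\mathbb{1}_{d_1}$; each $F_{k,i}$ is traceless Hermitian and $\sum_{i}F_{k,i}=0$ by the frame identity. Let $V_k=\mathrm{span}\{F_{k,i}:i=1\ldots d\}$, a subspace of the $(d_1^{2}-1)$-dimensional real space of traceless Hermitian operators equipped with the Hilbert--Schmidt product $\langle A,B\rangle=\mathrm{tr}(AB)$. Expanding $\mathrm{tr}(F_{k,i}F_{l,j})$ and using $|\bk{a_i^{(k)}}{a_j^{(l)}}|^{2}=1/d_1$ shows $\mathrm{tr}(F_{k,i}F_{l,j})=0$ whenever $k\neq l$, so $V_1,\ldots,V_m$ are mutually orthogonal and $\sum_{k=1}^{m}\dim V_k\leq d_1^{2}-1$.

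The quantitative heart of the argument is the estimate $\dim V_k\geq d_1-1$. Because $\{\ket{a_i^{(k)}}\}$ spans $\mathbb{C}^{d_1}$, one can select $d_1$ of these vectors forming a (not necessarily orthonormal) basis; a short computation with the biorthogonal dual basis shows that the rank-one projectors onto $d_1$ linearly independent vectors are themselves linearly independent operators, so $W_k:=\mathrm{span}\{\ket{a_i^{(k)}}\bra{a_i^{(k)}}:i=1\ldots d\}$ has dimension at least $d_1$. Since $\mathbb{1}_{d_1}\in W_k$ by the frame identity while $V_k$ lies in the trace-zero hyperplane, $W_k=V_k\oplus\mathbb{R}\mathbb{1}_{d_1}$ and hence $\dim V_k=\dim W_k-1\geq d_1-1$. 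Substituting, $m(d_1-1)\leq d_1^{2}-1=(d_1-1)(d_1+1)$, i.e. $m\leq d_1+1$.

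The step that needs genuine care is this lower bound on $\dim V_k$: it is exactly here that the possibility of $\mathcal{B}^{(k)}$ being an \emph{indirect} product basis has to be accommodated, the relevant point being that the first factors of \emph{any} orthogonal product basis — not merely a direct one — form a spanning, indeed tight, system in $\mathbb{C}^{d_1}$; everything else is bookkeeping with the Hilbert--Schmidt geometry. I note that the estimate nowhere uses $d_1\leq 3$; the hypothesis $d_1\in\{2,3\}$ enters only through the fact that $\mathbb{C}^{d_1}$ then supports $d_1+1$ mutually unbiased bases, which is what renders the bound attainable and underlies the classification results of the following subsection.
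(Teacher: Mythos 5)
Your proof is correct, but it takes a genuinely different route from the paper's. The paper proves the bound by first establishing a structural lemma (Lemma \ref{lem:orthog_tiple}): for $d_1=2$ or $3$, the first factors of any bipartite product basis of $\mathbb{C}^{d_1}\otimes\mathbb{C}^{d_{\overline{1}}}$ contain an orthonormal basis of $\mathbb{C}^{d_1}$ (proved by a two-step trace argument); combined with Theorem \ref{thm:generalised_pammer} this turns $m$ MU product bases into $m$ pairwise MU orthonormal bases of $\mathbb{C}^{d_1}$, and the classical bound $m\leq d_1+1$ finishes the job. You instead bypass the orthonormal-basis extraction entirely and run the operator-counting proof of that classical bound directly on the first-factor projectors: the partial-trace identity $\sum_i\ket{a_i^{(k)}}\bra{a_i^{(k)}}=d_{\overline{1}}\,\mathbb{1}$ guarantees a spanning (tight-frame) family, which already forces $\dim V_k\geq d_1-1$, while Theorem \ref{thm:generalised_pammer} gives the Hilbert--Schmidt orthogonality of the $V_k$. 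All the individual steps check out (the frame identity, the orthogonality computation $\mathrm{tr}(F_{k,i}F_{l,j})=\tfrac{1}{d_1}-\tfrac{1}{d_1}=0$, the linear independence of projectors onto linearly independent vectors, and the fact that pairwise orthogonal subspaces sum directly). The notable payoff is the one you flag yourself: your argument nowhere uses $d_1\leq 3$, so it actually establishes the bound $m\leq\min_r d_r+1$ for \emph{arbitrary} factor dimensions --- precisely the "smallest-subsystem" case of Conjecture \ref{conj:maximal_sets}, which the paper states it cannot reach because its strategy hinges on extending Lemma \ref{lem:orthog_tiple} beyond $d_1=3$. What the paper's route buys in exchange is structural information your dimension count does not provide: Lemma \ref{lem:orthog_tiple} shows that \emph{every} vector of a product basis sits inside an orthonormal $d_1$-tuple in the first factor, and that fact (not just the numerical bound) is what drives the classification Corollaries \ref{maximal_sets_p=00003D00003D2}--\ref{maximal_sets_d=00003D3^kxD} and the entanglement results of Section \ref{sec:max_entangled_vectors}. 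One presentational point: when you invoke Theorem \ref{thm:generalised_pammer} to get $|\bk{a_i^{(k)}}{a_j^{(l)}}|^2=1/d_1$, you are really applying the bipartite Lemma \ref{lem:pammer} to the split $\mathbb{C}^{d_1}\otimes\mathbb{C}^{d_{\overline{1}}}$; it is worth saying so explicitly, since your $\ket{b_i^{(k)}}$ absorbs all remaining factors.
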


\begin{proof}This result follows if we can show that a product basis of the space
$\mathbb{C}^{d_{1}}\otimes\mathbb{C}^{d_{r}}$ with $d_{1}\leq d_{r}$
contains a subset of $d_{1}$ orthogonal states in the subspace $\mathbb{C}^{d_{1}}$
for $d_{1}=2$ or $d_{1}=3$. To draw this conclusion, we first prove
a lemma on the existence of orthogonal bases in the subsystems of a \emph{bipartite}
orthonormal product basis. 
\begin{lem}
\label{lem:orthog_tiple} Consider an orthogonal product basis $\mathcal{B}=\{\ket{a_{i},b_{i}},i=1\ldots d\}$
in dimension $d=d_{1}d_{2}$, with $d_{1}=2$ or $d_1=3$. Then, for every vector
$\ket{a_{\kappa},b_{\kappa}},\kappa\in\{1\ldots d\}$, there exists
a subset $\mathcal{B}_{\kappa}$ of ${\cal B}$, with elements $\{\ket{a_{\kappa},b_{\kappa}},\ket{a_{\lambda},b_{\lambda}},\ldots\}$,
such that the vectors $\{\ket{a_{\kappa}},\ket{a_{\lambda}},\ldots\}$
constitute an orthonormal basis of $\mathbb{C}^{d_{1}}$. 
\end{lem}
It will be useful to call two orthogonal product vectors of $\mathbb{C}^{d_{1}}\otimes\mathbb{C}^{d_{2}}$
$r$-\emph{ortho\-go\-nal}, with $r=1,2$, if the two vectors of
the $r$-th subsystem are orthogonal. For example, the state $\ket{1,+}$
of a qubit pair is $1$-orthogonal to $\ket{0,0}$ but not $2$-orthogonal,
while the state $\ket{1,1}$ is both $1$- and $2$-orthogonal to
$\ket{0,0}$. This concept extends naturally to $n$-partite systems.

To show Lemma \ref{lem:orthog_tiple} we proceed in two steps. To
begin, we show that for each vector $\ket{a_{\kappa},b_{\kappa}}$
of a given product basis of $\mathbb{C}^{d_{1}d_{2}}$ we can find
$(d_{1}-1)$ vectors which are $1$-orthogonal to $\ket{a_{\kappa},b_{\kappa}}$
but not $2$-orthogonal. We call this set ${\cal A}_{\kappa}$. For
$d_{1}=2$, this result already ensures that the basis vectors of
the first system contain an orthonormal basis in the $\mathbb{C}^{2}$
subsystem.

We then show that for any state $\ket{a_{\lambda},b_{\lambda}}\in\mathcal{A}_{\kappa}$,
there is a set $\mathcal{A}_{\kappa\lambda}\subset\mathcal{B}$ of
$(d_{1}-2)$ vectors which are $1$-orthogonal
but not $2$-orthogonal to $\ket{a_\kappa,b_\kappa}$ and $\ket{a_\lambda,b_\lambda}$. Consequently, the set $\mathcal{A}_{\kappa\lambda}$
contains one vector if $d_{1}=3$ which means that we have identified
\emph{three} orthogonal vectors in $\mathbb{C}^{3}$. 
\begin{proof}
\textbf{Step 1}: Choose any vector $\ket{a_{\kappa},b_{\kappa}}$
of the given orthonormal product basis ${\cal B}$. Then, each of
the remaining $(d-1)$ vectors is either $2$-orthogonal to it
or not. Let us partition the $(d-1)$ integers $i=1\ldots d$,
$i\neq\kappa$, accordingly into two sets,
\begin{align}
\mathcal{I}_{\kappa} & =\{i:\bk{b_{\kappa}}{b_{i}}\neq0,i\neq\kappa\}\,,\\
\mathcal{I}_{\overline{\kappa}} & =\{i:\bk{b_{\kappa}}{b_{i}}=0\}\,.
\end{align}
We denote the associated sets of states by
\begin{align}
\mathcal{A}_{\kappa} & =\{\ket{a_{i},b_{i}},i\in \mathcal{I}_{\kappa}\}\quad\mbox{and}\quad\mathcal{A}_{\overline{\kappa}}=\{\ket{a_{i},b_{i}},i\in \mathcal{I}_{\overline{\kappa}}\}\,,
\end{align}
respectively. Since the states in $\mathcal{A}_{\kappa}$ are not
$2$-orthogonal to $\ket{a_{\kappa},b_{\kappa}}$, they must be $1$-orthogonal,
i.e. the factors of the first subsystem satisfy the relation $\bk{a_{\kappa}}{a_{i}}=0$,
$i\in \mathcal{I}_{\kappa}$. Effectively, we have split the product basis of
$\mathbb{C}^{d}$ into three disjoint sets, 
\begin{equation}
\mathcal{B}=\{\ket{a_{\kappa},b_{\kappa}}\}\cup\mathcal{A}_{\kappa}\cup\mathcal{A}_{\overline{\kappa}}\,.
\end{equation}

To show that ${\cal A}_{\kappa}$ is not empty, we evaluate the trace
of the product $M\otimes\kb{b_{\kappa}}{b_{\kappa}}$ in two ways,
where $M$ is an arbitrary operator acting on $\mathbb{C}^{d_{1}}$.
We have, of course,
\begin{align}
\mbox{tr}\left(M\otimes\kb{b_{\kappa}}{b_{\kappa}}\right) & =\mbox{tr}_{1}M\,,\label{eq: tr M 1}
\end{align}
and, using the orthonormal product basis ${\cal B}$, we also find

\begin{equation}
\mbox{tr}\left(M\otimes\kb{b_{\kappa}}{b_{\kappa}}\right)=\sum_{i=1}^{d}M_{i}\left|\bk{b_{i}}{b_{\kappa}}\right|^{2}=M_{\kappa}+\sum_{i\in \mathcal{I}_{\kappa}}M_{i}|\bk{b_{i}}{b_{\kappa}}|^{2}\,,\label{eq: tr M 2}
\end{equation}
where $M_{i}\equiv\bra{a_{i}}M\ket{a_{i}}$. The expressions on the
right-hand side of the last two equations must coincide for any operator
$M$. This is only possible if the vector $\ket{a_{\kappa}}$ combined
with the states $\{\ket{a_{i}},i\in \mathcal{I}_{\kappa}\}$, i.e. those present
in the product states of $\mathcal{A}_{\kappa}$, span the space $\mathbb{C}^{d_{1}}$.
If they do not, define $M=\kb{\chi}{\chi}$, where $\ket{\chi}$ is
any state in the complement of their span. This choice leads to a
contradiction since the right-hand side of \eqref{eq: tr M 2} evaluates
to zero while that of \eqref{eq: tr M 1} can be non-zero. Thus, the set
$\mathcal{A}_{\kappa}$ must contain at least $(d_{1}-1)$ elements.
For $d_{1}=2$, this result proves Lemma \ref{lem:orthog_tiple}.

\textbf{Step 2}: If we now pick an arbitrary element $\ket{a_{\lambda},b_{\lambda}}$
from $\mathcal{A}_{\kappa}$ and apply a reasoning parallel to Step
1, the product basis can be further divided into the following disjoint
subsets 
\begin{equation}
\mathcal{B}=\{\ket{a_{\kappa},b_{\kappa}},\ket{a_{\lambda},b_{\lambda}}\}\cup\mathcal{A}_{\kappa\lambda}\cup\mathcal{A}_{\overline{\kappa\lambda}}\,,
\end{equation}
where the sets of integers 
\begin{align}
\mathcal{I}_{\kappa\lambda} & =\{i:\bk{b_{\kappa}}{b_{i}}\bk{b_{\lambda}}{b_{i}}\neq0,i\neq\kappa,i\neq\lambda\}\,\subset \mathcal{I}_{\kappa}\,,\\
\mathcal{I}_{\overline{\kappa\lambda}} & =\{i:\bk{b_{\kappa}}{b_{i}}\bk{b_{\lambda}}{b_{i}}=0\}\,,
\end{align}
give rise to the sets of states ${\cal A}_{\kappa\lambda}=\{\ket{a_i,b_i},i\in \mathcal{I}_{\kappa\lambda}\}$ and ${\cal A}_{\overline{\kappa\lambda}}=\{\ket{a_i,b_i},i\in \mathcal{I}_{\overline{\kappa\lambda}}\}$,
respectively. We want to show that the vectors of $\mathcal{A}_{\kappa\lambda}$
in conjunction with $\ket{a_{\lambda}}$ and $\ket{a_{\kappa}}$ form
an orthonormal basis of the first subsystem. To do so, we express
the trace over an arbitrary operator $M$ on $\mathbb{C}^{d_{1}}$
as 
\begin{equation}
\mbox{tr}_{1}M=\frac{1}{\bk{b_{\kappa}}{b_{\lambda}}}\mbox{tr}(M\otimes\ket{b_{\lambda}}\bra{b_{\kappa}})\,,\label{eq:trace-1}
\end{equation}
where the number $\bk{b_{\kappa}}{b_{\lambda}}$ is different from zero
since these vectors are not $2$-orthogonal by construction. Using
the product basis ${\cal B}$ of $\mathbb{C}^{d}$ to evaluate the
right-hand side of Eq. (\ref{eq:trace-1}), we find 
\begin{equation}
\begin{aligned}\frac{1}{\bk{b_{\kappa}}{b_{\lambda}}}\sum_{i=1}^{d}M_{i}\bk{b_{i}}{b_{\lambda}}\bk{b_{\kappa}}{b_{i}}=M_{\kappa}+M_{\lambda}+\frac{1}{\bk{b_{\kappa}}{b_{\lambda}}}\sum_{i\in \mathcal{I}_{\kappa\lambda}}M_{i}\bk{b_{i}}{b_{\lambda}}\bk{b_{\kappa}}{b_{i}}\,,\end{aligned}
\end{equation}
since the terms in the sum with labels from the set $\mathcal{I}_{\overline{\kappa\lambda}}$
do not contribute. The relation 
\begin{equation}
\mbox{tr}_{1}M=M_{\kappa}+M_{\lambda}+\sum_{i\in \mathcal{I}_{\kappa\lambda}}M_{i}\frac{\bk{b_{i}}{b_{\lambda}}\bk{b_{\kappa}}{b_{i}}}{\bk{b_{\kappa}}{b_{\lambda}}}
\end{equation}
must hold for all choices of $M$. In analogy to Step 1, the vectors
$\{\ket{a_{i}},i\in \mathcal{I}_{\kappa\lambda}\}$ must, when supplemented
with $\ket{a_{\kappa}}$ and $\ket{a_{\lambda}}$, span the space
$\mathbb{C}^{d_{1}}$ in order to avoid a contradiction. Hence, $\mathcal{A}_{\kappa\lambda}$
has at least $(d_{1}-2)$ elements. If $d_{1}=3$, we have shown the
existence of three orthogonal vectors in $\mathbb{C}^{d_{1}}$ which
completes the proof of Lemma \ref{lem:orthog_tiple}. \end{proof}

A product basis of $\c{d_{1}\ldots d_{n}}$ is also a product basis
of $\c{d_{1}}\otimes\c{d_{\overline{1}}}$, with $d_{\overline{1}}=d/d_{1}$.
Thus, Lemma \ref{lem:orthog_tiple} implies that each MU product basis
of $\mathbb{C}^{d}$ contains a subset of $d_{1}$ orthogonal states
in the subspace $\c{d_{1}}$ when $d_1=2$ or $d_1=3$. On the basis of Theorem \ref{thm:generalised_pammer}
we finally conclude that at most $(d_{1}+1)$ MU product bases exist
in $\c{d_{1}\ldots d_{n}}$ so that Theorem \ref{maximal_sets_d=00003D00003D2}
holds.
\end{proof}
The bound we obtain in Theorem \ref{maximal_sets_d=00003D00003D2} suggests
that a more general result holds. 
\begin{conjecture}
\label{conj:maximal_sets} Suppose $d=d_{1}d_{2}\ldots d_{n}$. Then
there exist at most $(d_{m}+1)$ \emph{MU} product bases in $\mathbb{C}^{d}$,
where $d_{m}$ is the dimension of the subsystem with the \emph{least} number of \emph{MU} bases.
\end{conjecture}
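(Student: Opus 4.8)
The plan is to run the proof of Theorem~\ref{maximal_sets_d=00003D00003D2} again, with the qubit or qutrit subsystem replaced by the subsystem $\c{d_{m}}$ that supports the fewest MU bases; relabel the factors so that this is the first one, $d_{m}=d_{1}$, and write $\c{d}=\c{d_{1}}\otimes\c{d_{\overline{1}}}$ with $d_{\overline{1}}=d/d_{1}$. Suppose $\mathcal{B}_{1},\ldots,\mathcal{B}_{N}$ are pairwise MU product bases of $\c{d}$. The whole conjecture follows once one establishes the natural generalisation of Lemma~\ref{lem:orthog_tiple}, namely that \emph{every} orthogonal product basis of $\c{D}\otimes\c{D'}$, for arbitrary $D,D'\geq2$, contains $D$ vectors whose first factors are mutually orthogonal. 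Granting this, each $\mathcal{B}_{j}$, read as a bipartite product basis of $\c{d_{1}}\otimes\c{d_{\overline{1}}}$, contains $d_{1}$ vectors whose first factors form an orthonormal basis $E_{j}$ of $\c{d_{1}}$. Since MU bases are unbiased vector by vector, for $i\neq j$ every vector of $\mathcal{B}_{i}$ is MU to every vector of $\mathcal{B}_{j}$, so Theorem~\ref{thm:generalised_pammer} forces the first factors of these product states to be MU in $\c{d_{1}}$; in particular $E_{1},\ldots,E_{N}$ are pairwise MU bases of $\c{d_{1}}$. Hence $N$ is bounded by the maximal number of MU bases admitted by $\c{d_{1}}$, which never exceeds $d_{1}+1$, yielding the (possibly non-tight) bound claimed.

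It therefore remains to remove the hypothesis $d_{1}\in\{2,3\}$ from Lemma~\ref{lem:orthog_tiple}. I would first recast it in graph-theoretic terms: on the $DD'$ vectors of a product basis of $\c{D}\otimes\c{D'}$, join two vectors by an edge whenever their second factors are orthogonal. Non-adjacent vectors have orthogonal first factors, so a set of $D$ pairwise non-adjacent vectors is exactly a set of $D$ vectors with pairwise orthogonal -- hence orthonormal -- first factors, and such an independent set is what must be produced. The trace identities \eqref{eq: tr M 1}--\eqref{eq: tr M 2} already show that the first factors span $\c{D}$, and, exactly as in Step~1 of the proof of Lemma~\ref{lem:orthog_tiple}, that for every vertex $\ket{a_{\kappa},b_{\kappa}}$ the first factors of its non-neighbours, together with $\ket{a_{\kappa}}$, already span $\c{D}$. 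The task is to leverage this local spanning information -- available at every vertex -- to build up a full orthonormal $D$-tuple.

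The hard part, and the reason Lemma~\ref{lem:orthog_tiple} is presently limited to $d_{1}\in\{2,3\}$, is iterating Step~2 of its proof. Having fixed $k$ vectors $\ket{a_{\kappa_{1}},b_{\kappa_{1}}},\ldots,\ket{a_{\kappa_{k}},b_{\kappa_{k}}}$ with pairwise non-orthogonal second factors, one wants to rewrite $\operatorname{tr}_{1}M$ so that only those basis vectors whose second factors are non-orthogonal to \emph{all} of $\ket{b_{\kappa_{1}}},\ldots,\ket{b_{\kappa_{k}}}$ contribute; this calls for an operator $X$ on $\c{D'}$ with $\operatorname{tr}X\neq0$ whose sesquilinear form $\bra{v}X\ket{v}$ vanishes on the union of hyperplanes $b_{\kappa_{1}}^{\perp}\cup\cdots\cup b_{\kappa_{k}}^{\perp}$. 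For $k=1$ one takes $X=\ket{b_{\kappa_{1}}}\bra{b_{\kappa_{1}}}$ and for $k=2$ the rank-one operator $X=\ket{b_{\kappa_{2}}}\bra{b_{\kappa_{1}}}$, but for $k\geq3$ no such operator exists in general: vanishing on $b_{\kappa_{1}}^{\perp}$ forces the corresponding block of $X$ to vanish, so that $X$ has rank at most two, and imposing vanishing on two further hyperplanes then forces $X$ to be proportional to $\ket{b_{\kappa_{1}}}\bra{b_{\kappa_{1}}}$, which no longer annihilates the remaining hyperplanes. Consequently a genuinely new argument is needed for $d_{1}\geq4$: either a combinatorial analysis of the second-factor orthogonality graph, or an induction on $D$ (or on $D'$) exploiting the Schmidt structure of the basis -- the latter is delicate because restricting $\mathcal{B}$ to $a_{\kappa}^{\perp}\otimes\c{D'}$ need not be a full product basis of that smaller space, so the inductive hypothesis does not apply verbatim. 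I should stress that even the truth of the generalised lemma is open; it is essentially the statement that the first factors of any bipartite product basis always contain an orthonormal basis of $\c{D}$, and settling it appears to require a better understanding of the structure of product bases in composite systems.
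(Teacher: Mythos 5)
You have not proved the statement, and neither does the paper: it is recorded there as a \emph{conjecture} precisely because the reduction you describe cannot currently be completed. Immediately after stating Conjecture~\ref{conj:maximal_sets}, the authors sketch exactly your strategy --- show that every orthogonal product basis of $\mathbb{C}^{D}\otimes\mathbb{C}^{D'}$ contains $D$ vectors whose first factors form an orthonormal basis of $\mathbb{C}^{D}$, then use Theorem~\ref{thm:generalised_pammer} to conclude that the bases so extracted from distinct MU product bases are pairwise MU in $\mathbb{C}^{d_{m}}$ --- and they state that this is how Theorem~\ref{maximal_sets_d=00003D00003D2} is proved, but that they ``are not able to include higher dimensions.'' Your reduction is correct and coincides with theirs: granting the generalized Lemma~\ref{lem:orthog_tiple}, the extraction of pairwise MU orthonormal bases $E_{1},\ldots,E_{N}$ of $\mathbb{C}^{d_{m}}$ via Theorem~\ref{thm:generalised_pammer} is sound, and in fact yields the sharper bound $N\leq$ (actual number of MU bases in $\mathbb{C}^{d_{m}}$), which implies the stated $N\leq d_{m}+1$.

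The genuine gap is the one you flag yourself: the extension of Lemma~\ref{lem:orthog_tiple} to factors of dimension $4$ or more is unproven, and without it the argument establishes nothing beyond what Theorem~\ref{maximal_sets_d=00003D00003D2} already covers. Your diagnosis of the obstruction is accurate and worth recording: the method needs an operator $X$ on $\mathbb{C}^{D'}$ with $\mbox{tr}\,X\neq0$ whose form $\bra{v}X\ket{v}$ vanishes on $b_{\kappa_{1}}^{\perp}\cup\cdots\cup b_{\kappa_{k}}^{\perp}$; since vanishing of $\bra{v}X\ket{v}$ on a subspace $W$ forces $P_{W}XP_{W}=0$ by polarization, such an $X$ has rank at most two once one hyperplane condition is imposed, and three or more conditions are generically incompatible with $\mbox{tr}\,X\neq0$. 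So the trace identity that powers Steps 1 and 2 of the paper's proof genuinely cannot be iterated a third time, and a new idea is required. In short: your proposal is an honest and correct account of why the statement is plausible and of where the difficulty lies, but it is not a proof, and the statement remains open for $d_{m}\geq4$.
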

One way to prove the conjecture is to check whether the collection
of $d$ vectors figuring in the $m$-th subsystem of a product
basis of $\mathbb{C}^{d}$ contain an orthonormal basis of the space
$\mathbb{C}^{d_{m}}$. Assuming this to be true, Theorem
\ref{thm:generalised_pammer} limits the number of MU product bases
of $\mathbb{C}^{d}$ since only $(d_{m}+1)$ MU bases exist in $\mathbb{C}^{d_{m}}$.
The subsystem with the smallest number of MU bases, i.e. $\mathbb{C}^{d_m}$, therefore
restricts the number of MU product bases which can exist in the system
of dimension $d$ --- which is the content of Conjecture \ref{conj:maximal_sets}.
Our proof of Theorem \ref{maximal_sets_d=00003D00003D2} implements
exactly this strategy but we are not able to include higher dimensions.

\section{Maximal sets of MU product bases}

\label{sec:classification}

All sets of MU product bases are known for bipartite systems with
dimensions $d=2\times2$ and $d=2\times3$ \cite{mcnulty_all}. For $d=6$ they
contain several continuous families of MU product pairs and two triples.
Theorem \ref{maximal_sets_d=00003D00003D2} allows us to draw conclusions
about the structure of MU product bases of more general bipartite
and multipartite systems with dimensions $d=d_{1}\ldots d_{n}$,
as long as $d_{1}=2$ or $d_{1}=3$. In particular, we will enumerate
\emph{maximal} sets of $(d_{1}+1)$ MU product bases if $d=2^{n}$
and $d=3^{n}$, identifying a unique triple and quadruple of MU product
bases, respectively. \emph{Inequivalent} triples and quadruples, respectively,
are found to exist already for $d=2p$ and $d=3p$, with prime numbers
$p\geq5$.

MU product bases are \emph{equivalent }if they can be mapped onto
each other without affecting both the product structure of the states
and the modulus of their inner products. As explained in \cite{mcnulty_all},
the allowed equivalence transformations consist of local unitary maps
acting on all bases simultaneously, the multiplication of any state
by an arbitrary phase factor, the permutation of states within a basis,
and the local complex conjugation of all bases; in addition, the bases
may be written down in an arbitrary order.

To begin, we recall the unique complete sets of MU bases of the spaces
$\mathbb{C}^{2}$ and $\mathbb{C}^{3}$, expressing each basis as
a square matrix, with columns given by the components of (unnormalised)
basis vectors relative to the standard basis. In dimension $d=2$,
one triple of MU product bases exists, 
\begin{equation}
\left(\begin{array}{rr}
1 & 0\\
0 & 1
\end{array}\right)\,,\quad\left(\begin{array}{rr}
1 & 1\\
1 & -1
\end{array}\right)\,,\quad\left(\begin{array}{rr}
1 & 1\\
i & -i
\end{array}\right)\,,\label{eq: complete set d=00003D2}
\end{equation}
where any other triple associated with the space $\mathbb{C}^{2}$ is equivalent
to this one. Clearly, the bases are determined by the eigenstates
of the Pauli operators $\sigma_{z}$, $\sigma_{x}$ and $\sigma_{y}$,
respectively, $\{\ket{j_{z}}\}$, $\{\ket{j_{x}}\}$ and $\{\ket{j_{y}}\}$,
with $j=0,1$.

For $d=3$, the set of four MU bases 
\begin{equation}
\left(\begin{array}{ccc}
1 & 0 & 0\\
0 & 1 & 0\\
0 & 0 & 1
\end{array}\right),\quad\left(\begin{array}{ccc}
1 & 1 & 1\\
1 & \omega & \omega^{2}\\
1 & \omega^{2} & \omega
\end{array}\right),\quad\left(\begin{array}{ccc}
1 & 1 & 1\\
\omega & \omega^{2} & 1\\
\omega & 1 & \omega^{2}
\end{array}\right),\quad\left(\begin{array}{ccc}
1 & 1 & 1\\
\omega^{2} & 1 & \omega\\
\omega^{2} & \omega & 1
\end{array}\right)\,,\label{complete set d=00003D3}
\end{equation}
where $\omega=e^{2\pi i/3}$, is also unique up to equivalence. Here
the column vectors emerge as the eigenstates $\{\ket{J_{z}}\}$, $\{\ket{J_{x}}\}$,
$\{\ket{J_{y}}\}$ and $\{\ket{J_{w}}\}$, $J=0,1,2$, of generalized
Pauli operators $Z,X,XZ$, and $XZ^2$ in $\mathbb{C}^{3}$. The
operators give rise to the discrete Heisenberg-Weyl group in $\mathbb{C}^{3}$,
via the relation $ZX=\omega XZ$, where $X$ and $Z$ are the Heisenberg-Weyl shift and phase operators, respectively.

\subsubsection*{Dimension $d=2^{n}$}
Here we consider product bases of dimension $d=d_1\ldots d_n$, with $d_r=2$, for each $r=1\ldots n$.
\begin{cor}
\label{maximal_sets_p=00003D00003D2} In the space $\mathbb{C}^{d}$
with dimension $d=2^{n}$, a unique triple of \emph{MU} product bases
exists, 
\begin{align}
\mathcal{B}_{0} & =\{\ket{j_{z}^{1}}\otimes\ldots\otimes\ket{j_{z}^{n}}\},\label{b0}\\
\mathcal{B}_{1} & =\{\ket{j_{x}^{1}}\otimes\ldots\otimes\ket{j_{x}^{n}}\},\label{b1}\\
\mathcal{B}_{2} & =\{\ket{j_{y}^{1}}\otimes\ldots\otimes\ket{j_{y}^{n}}\},\label{b2}
\end{align}
up to local equivalence transformations; here $\{\ket{j_{b}^{r}}, j=0,1\}$, $b=z,x,y$, are, for each $r=1\ldots n$, the eigenstates of the three Pauli operators in $\mathbb{C}^{2}$. 
\end{cor}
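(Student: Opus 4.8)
The plan is to combine Theorem~\ref{maximal_sets_d=00003D00003D2} with the uniqueness of the complete set of MU bases in $\mathbb{C}^2$, bootstrapping up through the $n$ tensor factors. By Theorem~\ref{maximal_sets_d=00003D00003D2} (applied with $d_1=2$), the space $\mathbb{C}^d$ with $d=2^n$ supports at most three MU product bases, so it suffices to show that any triple of MU product bases is locally equivalent to the triple $\mathcal{B}_0,\mathcal{B}_1,\mathcal{B}_2$ displayed in \eqref{b0}--\eqref{b2}, and that this triple is indeed mutually unbiased (the latter being immediate from Lemma~\ref{lem:pammer} or a direct computation, since each Pauli factor triple is MU in $\mathbb{C}^2$).

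For the uniqueness part I would argue by induction on $n$. The base case $n=1$ is the statement, recalled after \eqref{complete set d=00003D3}, that the triple \eqref{eq: complete set d=00003D2} is the unique complete set of MU bases in $\mathbb{C}^2$ up to the allowed equivalence transformations. For the inductive step, suppose $\{\mathcal{C}_0,\mathcal{C}_1,\mathcal{C}_2\}$ is a triple of MU product bases in $\mathbb{C}^{2^n}=\mathbb{C}^2\otimes\mathbb{C}^{2^{n-1}}$. Write each basis as $\mathcal{C}_k=\{\ket{a_i^{(k)},b_i^{(k)}}\}$ with $\ket{a_i^{(k)}}\in\mathbb{C}^2$. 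The key structural input is Lemma~\ref{lem:orthog_tiple}: since $d_1=2$, for each $\mathcal{C}_k$ the collection of first-subsystem vectors $\{\ket{a_i^{(k)}}\}$ contains an orthonormal basis of $\mathbb{C}^2$; moreover, because there are only $2^n$ vectors and the second-subsystem factors must be normalised, a trace/counting argument (as in Step~1 of the proof of Lemma~\ref{lem:orthog_tiple}, using $M=\mathbb{1}$) forces each of the two orthogonal first-subsystem states to be repeated exactly $2^{n-1}$ times. Hence $\mathcal{C}_k$ splits as $\ket{e_0^{(k)}}\otimes\mathcal{D}_0^{(k)} \ \cup\ \ket{e_1^{(k)}}\otimes\mathcal{D}_1^{(k)}$ for some orthonormal basis $\{\ket{e_0^{(k)}},\ket{e_1^{(k)}}\}$ of $\mathbb{C}^2$ and product bases $\mathcal{D}_0^{(k)},\mathcal{D}_1^{(k)}$ of $\mathbb{C}^{2^{n-1}}$ --- that is, every MU product basis of $\mathbb{C}^{2^n}$ is a \emph{direct} product basis across the first cut. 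Lemma~\ref{lem:zeilinger} then applies: the three first-subsystem bases $\{\ket{e_0^{(k)}},\ket{e_1^{(k)}}\}$, $k=0,1,2$, are pairwise MU in $\mathbb{C}^2$, and likewise the three bases $\mathcal{D}_0^{(0)},\mathcal{D}_0^{(1)},\mathcal{D}_0^{(2)}$ of $\mathbb{C}^{2^{n-1}}$ are pairwise MU (and are product bases). Applying a local unitary on the first qubit we bring $\{\ket{e_0^{(k)}},\ket{e_1^{(k)}}\}$ to the Pauli triple of \eqref{eq: complete set d=00003D2}, and by the induction hypothesis a local unitary on the remaining $n-1$ qubits, together with phase and permutation freedom, brings $\mathcal{D}_0^{(0)},\mathcal{D}_0^{(1)},\mathcal{D}_0^{(2)}$ to the $(n-1)$-fold Pauli triple.

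It remains to deal with the $\mathcal{D}_1^{(k)}$ branch. Once the first-qubit bases are fixed to the Pauli eigenbases and $\ket{e_0^{(k)}}$ is identified with $\ket{k_b}$ for the appropriate $b\in\{z,x,y\}$, the MU conditions between $\mathcal{C}_j$ and $\mathcal{C}_k$ for $j\neq k$ couple $\mathcal{D}_1^{(j)}$ to $\mathcal{D}_0^{(k)}$ and $\mathcal{D}_1^{(k)}$ and force each $\mathcal{D}_1^{(k)}$ to be MU to the two already-normalised $(n-1)$-qubit Pauli bases from the other branches; by the induction hypothesis (uniqueness of the MU triple in $\mathbb{C}^{2^{n-1}}$) this pins $\mathcal{D}_1^{(k)}$ down to the same Pauli eigenbasis as $\mathcal{D}_0^{(k)}$, up to phases that can be absorbed. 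Reassembling, $\mathcal{C}_k$ coincides with $\mathcal{B}_k$ up to the allowed equivalence transformations, completing the induction. I expect the main obstacle to be precisely this last bookkeeping step: verifying that the direct-product splitting is \emph{consistent across all three bases simultaneously}, i.e. that one cannot have the first-qubit cut "rotate" independently in each branch, and that the residual phase and permutation freedoms suffice to reconcile the $\mathcal{D}_0$ and $\mathcal{D}_1$ branches without introducing a genuinely inequivalent triple --- this is where the rigidity of the $n=1$ and $n-1$ cases has to be invoked carefully rather than just cited.
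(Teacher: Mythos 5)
Your overall strategy (bound the number by Theorem~\ref{maximal_sets_d=00003D00003D2}, show every basis in a triple is a direct product basis, then normalize by local unitaries) matches the paper's, but there is a genuine gap at the crucial step where you establish directness. You claim that Lemma~\ref{lem:orthog_tiple} together with a trace/counting argument with $M=\mathbb{1}$ forces each of the two orthogonal first-subsystem states of a \emph{single} product basis $\mathcal{C}_k$ to be repeated exactly $2^{n-1}$ times. This is false. The identity you get from $M=\kb{\chi}{\chi}$ is $\sum_i|\bk{\chi}{a_i}|^2=2^{n-1}$ for every unit vector $\ket{\chi}$, and this is perfectly compatible with more than two distinct first factors: the indirect product basis $\{\ket{0,0},\ket{1,0},\ket{+,1},\ket{-,1}\}$ of $\mathbb{C}^2\otimes\mathbb{C}^2$ has \emph{four} distinct states in the first subsystem, satisfies $\sum_i|\bk{\chi}{a_i}|^2=2$ for all $\ket{\chi}$, and is not a direct product basis across the first cut. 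Directness is not a property of one product basis in isolation; it is forced only by membership in an MU \emph{triple}. The missing ingredient is Theorem~\ref{thm:generalised_pammer}: every first factor of $\mathcal{C}_0$ must be MU to every first factor of $\mathcal{C}_1$ and of $\mathcal{C}_2$, each of which contains (by Lemma~\ref{lem:orthog_tiple}) an orthonormal pair of $\mathbb{C}^2$, and these two pairs are themselves MU; since a qubit state MU to two fixed MU bases of $\mathbb{C}^2$ must lie (up to phase) in the unique third MU basis, each $\mathcal{C}_k$ can contain only one orthonormal pair in that subsystem. This is exactly how the paper argues.

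Once that is repaired, your induction would go through, but it is heavier than necessary and forces you into the $\mathcal{D}_1^{(k)}$ bookkeeping you flag as delicate. The paper sidesteps all of this by applying the directness argument to \emph{every} bipartition $\mathbb{C}^2\otimes\mathbb{C}^{2^{n-1}}$ at once: each of the $n$ qubit factors of each basis then carries a single orthonormal basis of $\mathbb{C}^2$, so all three bases are full direct products of single-qubit bases. The classification then reduces to a one-shot normalization: fix $\mathcal{B}_0$ to the $z$-basis by local unitaries, observe that each factor of the other two bases is an $x$- or $y$-eigenbasis, and use the local phase gate $\kb{0_z}{0_z}+i\kb{1_z}{1_z}$ (which swaps the $x$- and $y$-eigenbases while fixing the $z$-basis up to phases) to sort them. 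No induction and no branch reconciliation is needed.
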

\begin{proof}First we show that any triple of MU product bases in
dimension $d=2^{n}$ consists of direct product bases. According to
Lemma \ref{lem:orthog_tiple}, every product basis of dimension $d=2q$
contains a pair of orthogonal states in the $\mathbb{C}^{2}$ subspace.
Hence, Theorem \ref{thm:generalised_pammer} implies that each product
basis of an MU triple in dimension $d=2q$ contains a unique pair
of orthogonal states in $\mathbb{C}^{2}$. Applying this argument
to all bipartitions $\mathbb{C}^{2}\otimes\mathbb{C}^{2^{n-1}}$ of
$\mathbb{C}^{2^{n}}$, we conclude that only one pair of orthogonal
states occurs in each subsystem $\mathbb{C}^{2}$. Therefore, all three
MU bases are direct product bases.

By performing local unitary transformations, we turn the first basis
into the standard basis, displayed in Eq. (\ref{b0}). The remaining
two bases $\mathcal{B}_{1}$ and $\mathcal{B}_{2}$ contain either
eigenstates of $\sigma_{x}$ or $\sigma_{y}$ in each of their subsystems.
Whenever the states $\ket{j_{y}}$ appear in a subsystem of the second
basis we apply the unitary transformation $\kb{0_{z}}{0_{z}}+\omega\kb{1_{z}}{1_{z}}$,
with $\omega=i$. This operation, a rotation by $\pi/2$ about
the $z$-axis exchanges the operators $\sigma_{x}$ and $\sigma_{y}$,
hence their eigenstates, and it leaves the states $\ket{j_{z}}$ unchanged,
up to phase factors. These properties of the transformation can be
directly verified by inspecting Eq. (\ref{eq: complete set d=00003D2}).
Hence, the second and third bases can always be mapped to $\mathcal{B}_{1}$
and $\mathcal{B}_{2}$, respectively, which completes the proof of
Corollary \ref{maximal_sets_p=00003D00003D2}.\end{proof}

\subsubsection*{Dimension $d=3^{n}$}
We now prove an analogous result for product bases of dimension $d=d_1\ldots d_n$, where $d_r=3$, for each $r=1\ldots n$.
\begin{cor}
\label{maximal_sets_p=00003D00003D3} In the space $\mathbb{C}^{d}$
with dimension $d=3^{n}$, a unique quadruple of \emph{MU} product
bases exists, 
\begin{align}
\mathcal{B}_{0} & =\{\ket{J_{z}^{1}}\otimes\ldots\otimes\ket{J_{z}^{n}}\},\label{b0 d=00003D3^n}\\
\mathcal{B}_{1} & =\{\ket{J_{x}^{1}}\otimes\ldots\otimes\ket{J_{x}^{n}}\},\label{b1 d=00003D3^n}\\
\mathcal{B}_{2} & =\{\ket{J_{y}^{1}}\otimes\ldots\otimes\ket{J_{y}^{n}}\},\label{b2 d=00003D3^n}\\
\mathcal{B}_{3} & =\{\ket{J_{w}^{1}}\otimes\ldots\otimes\ket{J_{w}^{n}}\},\label{b3 d=00003D3^n}
\end{align}
up to local equivalence transformations; here\emph{ }$\{\ket{J_{b}^{r}}, J=0,1,2\}$,
$b=z,x,y,w$, are, for each $r=1\ldots n$, the eigenstates of
$Z$, $X$, $XZ$ and ${XZ^{2}}$, respectively, where $X$ and $Z$ are the Heisenberg-Weyl shift and phase operators in $\mathbb{C}^3$.
\end{cor}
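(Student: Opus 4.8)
The plan is to mirror the argument of Corollary~\ref{maximal_sets_p=00003D00003D2} step by step, replacing the two-dimensional ingredients by their three-dimensional analogues. First I would show that any quadruple of MU product bases in dimension $d=3^{n}$ must consist of \emph{direct} product bases. By Lemma~\ref{lem:orthog_tiple}, every product basis of dimension $d=3q$ contains a triple of mutually orthogonal states in each $\mathbb{C}^{3}$ subsystem; since $\mathbb{C}^{3}$ supports at most four MU bases, Theorem~\ref{thm:generalised_pammer} forces each of the four bases of an MU quadruple to contain a \emph{unique} such orthogonal triple in that subsystem. Running this over every bipartition $\mathbb{C}^{3}\otimes\mathbb{C}^{3^{n-1}}$ shows that exactly one orthonormal triple of $\mathbb{C}^{3}$ occurs in each tensor factor of each basis, i.e. all four bases are direct product bases $\mathcal{B}_{k}=\{\ket{c^{1}_{k,j_{1}}}\otimes\cdots\otimes\ket{c^{n}_{k,j_{n}}}\}$ with orthonormal triples $\{\ket{c^{r}_{k,j}}\}_{j=0,1,2}$ in $\mathbb{C}^{3}$.

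Next I would use mutual unbiasedness to pin down the bases factor by factor. Iterating Lemma~\ref{lem:zeilinger} (or Theorem~\ref{thm:generalised_pammer}) across the bipartitions, MU of $\mathcal{B}_{k}$ and $\mathcal{B}_{l}$ for $k\neq l$ is equivalent to the pairwise MU of the orthonormal triples $\{\ket{c^{r}_{k,j}}\}$ and $\{\ket{c^{r}_{l,j}}\}$ in $\mathbb{C}^{3}$ for \emph{every} $r$. Hence for each fixed $r$ the four triples $\{\ket{c^{r}_{k,j}}\}_{k=0,1,2,3}$ form a complete set of four MU bases in $\mathbb{C}^{3}$. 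By the known uniqueness of the complete set of MU bases in $\mathbb{C}^{3}$ up to equivalence (Eq.~(\ref{complete set d=00003D3})), in each subsystem these four triples are, up to a local unitary $U_{r}$, a phase on each vector, and a permutation of vectors, the eigenbases of $Z,X,XZ,XZ^{2}$. Applying $\bigotimes_{r}U_{r}$ simultaneously to all four bases, we may assume $\mathcal{B}_{0}$ is the standard basis (\ref{b0 d=00003D3^n}) in every factor.

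It then remains to fix the \emph{ordering} ambiguity: a priori, in subsystem $r$ the remaining three triples could appear as $X,XZ,XZ^{2}$ in any of the $3!$ relabellings, and these choices could differ from factor to factor, so the global basis $\mathcal{B}_{1}$ need not be $\{\ket{J^{1}_{x}}\otimes\cdots\}$. I would resolve this exactly as in the qubit case, by exhibiting local unitaries that permute the roles of $X,XZ,XZ^{2}$ while fixing $Z$ up to phases --- for instance the phase gate $\mathrm{diag}(1,1,\omega)$ (equivalently $Z$ itself) sends $X\mapsto XZ\mapsto XZ^{2}\mapsto X$ up to phases, while leaving the $Z$-eigenbasis invariant, and a suitable conjugation together with complex conjugation realizes the transposition exchanging $XZ$ and $XZ^{2}$. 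Since the equivalence group allows writing the bases in any order, the only genuine invariant is the \emph{set} $\{X,XZ,XZ^{2}\}$ in each factor, which we can always normalize simultaneously so that the $r$-th factors of $\mathcal{B}_{1},\mathcal{B}_{2},\mathcal{B}_{3}$ are the $X$-, $XZ$-, $XZ^{2}$-eigenbases for every $r$, yielding (\ref{b1 d=00003D3^n})--(\ref{b3 d=00003D3^n}).

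The main obstacle I anticipate is this last bookkeeping step: unlike $\mathbb{C}^{2}$, where only two non-trivial MU bases need to be sorted and a single $\pi/2$ rotation swaps them, in $\mathbb{C}^{3}$ there are three non-trivial bases and a whole $S_{3}$ of relabellings per factor, so one must verify that the available local operations (the Heisenberg--Weyl phase operator, conjugation by appropriate Clifford elements, and complex conjugation) act on the set $\{X,XZ,XZ^{2}\}$ by the full symmetric group while fixing $Z$ up to phase. Checking this, most cleanly by direct inspection of the matrices in Eq.~(\ref{complete set d=00003D3}) as was done for Eq.~(\ref{eq: complete set d=00003D2}) in the qubit proof, is the crux; everything else is a direct transcription of the argument for $d=2^{n}$.
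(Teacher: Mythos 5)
Your proposal is correct and follows essentially the same route as the paper: direct-product structure from Lemma \ref{lem:orthog_tiple} and Theorem \ref{thm:generalised_pammer}, standardization of $\mathcal{B}_0$, and then per-factor diagonal phase gates plus local complex conjugation to sort the remaining three bases (the paper carries this out in two explicit steps rather than invoking the full $S_3$ action, but the tools are identical). One small correction: the cyclic permutation of the $X$-, $XZ$- and $XZ^{2}$-eigenbases is effected by a diagonal gate such as $\mathrm{diag}(1,\omega^{k},\omega^{k})$ or $\mathrm{diag}(1,1,\omega)$, not by $Z=\mathrm{diag}(1,\omega,\omega^{2})$ itself, which merely permutes the vectors \emph{within} each of these bases.
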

\begin{proof}
As in the previous case, we first use Lemma \ref{lem:orthog_tiple}
and Theorem \ref{thm:generalised_pammer} to conclude that all four
MU product bases must be \emph{direct} product bases, constructed
from various tensor products of the complete set of four MU bases
in dimension three. Any one of these product bases can always be transformed
to the standard basis $\mathcal{B}_{0}$ by a suitable product of
local unitary operations. Then, the remaining three product bases consist
of tensor products of various combinations of the bases $\{\ket{J_{b}^{r}},J=0,1,2\}$, $b=w,x,y$. Pick any of these three bases and apply the operator
$\kb{0_{z}}{0_{z}}+\omega^{k}\kb{1_{z}}{1_{z}}+\omega^{k}\kb{2_{z}}{2_{z}}$,
where $\omega=e^{2\pi i/3}$ and $k\in\{0,1,2\}$, in the following
way: choose $k=0$ (or $k=1$ or $k=2$) for the factors which contain
the $x$-basis (or the $y$- basis or the $w$-basis, respectively).
This operation maps the product bases of the second basis to the tensor
product Fourier basis $\mathcal{B}_{1}$ which can be seen directly
upon inspecting the expressions given in \eqref{complete set d=00003D3}.
The states of the basis ${\cal B}_{0}$ only pick up irrelevant phase
factors during this process. 

Finally, the last two bases must be products of either $\{\ket{J_{y}}\}$
and $\{\ket{J_{w}}\}$. Picking one of them, each factor $\{\ket{J_{w}}\}$
can be turned into $\{\ket{J_{y}}\}$ by a local complex conjugation
which swaps $\{\ket{J_{y}}\}$ and $\{\ket{J_{w}}\}$ and leaves invariant
the bases $\{\ket{J_{z}}\}$ and $\{\ket{J_{y}}\}$ (see \eqref{complete set d=00003D3}).
Therefore, the last two bases have indeed been mapped to the product
bases $\mathcal{B}_{2}$ and $\mathcal{B}_{3}$ listed in Corollary
\ref{maximal_sets_p=00003D00003D3}.
\end{proof}

\subsubsection*{Dimension $d=2\times5$}

In dimension five, there exists a single complete set of six MU bases.
We shall denote these bases by $\mathcal{G}_{i}$, $i=0\ldots5$,
and refer to \cite{brierley_all} for their explicit form.

Theorem \ref{maximal_sets_d=00003D00003D2} implies that at most three
MU product bases exist in dimension $d=2\times5$. In addition, Theorem
\ref{thm:generalised_pammer} has implications for their structure. 
\begin{cor}
\label{maximal_sets_p=00003D00003D2x5} In the space $\mathbb{C}^{d}$
with dimension $d=2\times5$, any triple of \emph{MU} product bases
must be of the form 
\begin{align}
\mathcal{B}_{0} & =\{\ket{0_{z}}\otimes\mathcal{G}(0_z)\,,\,\ket{1_{z}}\otimes\mathcal{G}({1_z})\},\label{2x5triple1}\\
\mathcal{B}_{1} & =\{\ket{0_{x}}\otimes\mathcal{G}({0_x})\,,\,\ket{1_{x}}\otimes\mathcal{G}({1_x})\},\label{2x5 triple2}\\
\mathcal{B}_{2} & =\{\ket{0_{y}}\otimes\mathcal{G}({0_y})\,,\,\ket{1_{y}}\otimes\mathcal{G}({1_y})\},\label{2x5triple3}
\end{align}
up to local equivalence transformations; here $\{\ket{j_{b}}, j=0,1\}$, $b=z,x,y$, are the eigenstates of the three Pauli operators in $\mathbb{C}^{2}$, and $\mathcal{G}({j_b})$ are bases of $\c 5$ for each $j_b$, such that $\mathcal{G}({j_b})\in\mathcal{G}_i$. \end{cor}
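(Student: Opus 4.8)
The plan is to mimic the structure of the proofs of Corollaries \ref{maximal_sets_p=00003D00003D2} and \ref{maximal_sets_p=00003D00003D3}, exploiting the rigidity coming from dimension two together with the uniqueness of the complete set of six MU bases in dimension five. First I would invoke Lemma \ref{lem:orthog_tiple} in the bipartition $\mathbb{C}^{2}\otimes\mathbb{C}^{5}$: every product basis of $\mathbb{C}^{10}$ contains a pair of orthogonal states in the $\mathbb{C}^{2}$ factor. Combined with Theorem \ref{thm:generalised_pammer}, which forces every vector of an MU product basis to have its $\mathbb{C}^{2}$-factor MU to all $\mathbb{C}^{2}$-factors of the other two bases, this shows that each of the three bases uses a \emph{single} orthogonal pair in $\mathbb{C}^{2}$, and these three pairs are pairwise MU. Hence all three are \emph{direct} product bases $\{\ket{e}\otimes\mathcal{H}_e : e\in\{0,1\}\}$ with $\mathcal{H}_0,\mathcal{H}_1$ orthonormal bases of $\mathbb{C}^{5}$; by Lemma \ref{lem:zeilinger} the $\mathbb{C}^{5}$-components of any two of the bases are pairwise MU.

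Next I would use the MU condition on the $\mathbb{C}^{5}$ side more carefully. Writing the three bases as $\mathcal{B}_k=\{\ket{k_0}\otimes\mathcal{H}^{(k)}_0,\ \ket{k_1}\otimes\mathcal{H}^{(k)}_1\}$, Lemma \ref{lem:zeilinger} tells us that for any $k\neq l$ and any $i,j\in\{0,1\}$ with $\bk{k_i}{l_j}\neq0$, the $\mathbb{C}^{5}$-bases $\mathcal{H}^{(k)}_i$ and $\mathcal{H}^{(l)}_j$ must be MU. Since any two distinct Pauli bases in $\mathbb{C}^{2}$ have \emph{all} overlaps nonzero, every one of the six $\mathbb{C}^{5}$-bases appearing across $\mathcal{B}_0,\mathcal{B}_1,\mathcal{B}_2$ is MU to every other one in a \emph{different} basis. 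Thus the (at most six) distinct $\mathbb{C}^{5}$-bases form a set of pairwise MU bases in $\mathbb{C}^{5}$ — but there can be repetitions within a single $\mathcal{B}_k$, i.e. possibly $\mathcal{H}^{(k)}_0=\mathcal{H}^{(k)}_1$, which is harmless. In any case all the distinct ones among them belong to \emph{some} complete set of six MU bases, and since that complete set is unique up to equivalence in $\mathbb{C}^{5}$ \cite{brierley_all}, after a global local unitary on the $\mathbb{C}^{5}$ factor every $\mathcal{H}^{(k)}_i$ is one of the $\mathcal{G}_0,\ldots,\mathcal{G}_5$; this is the content of the statement $\mathcal{G}(j_b)\in\mathcal{G}_i$.

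It remains to normalise the $\mathbb{C}^{2}$ side. Exactly as in Corollary \ref{maximal_sets_p=00003D00003D2}, a local unitary on $\mathbb{C}^{2}$ turns $\mathcal{B}_0$'s qubit pair into $\{\ket{0_z},\ket{1_z}\}$; the remaining two qubit pairs are then eigenbases of $\sigma_x$ or $\sigma_y$, and whenever $\sigma_y$ occurs we apply the $\pi/2$ $z$-rotation $\kb{0_z}{0_z}+i\kb{1_z}{1_z}$, which swaps the $\sigma_x$- and $\sigma_y$-bases and fixes the $\sigma_z$-basis up to phases (readable off Eq.~\eqref{eq: complete set d=00003D2}); here the $\mathbb{C}^{5}$ factors are untouched, and relabelling of the two qubit eigenvectors is an allowed permutation of states within a basis, which we absorb into the labelling of the $\mathcal{G}(j_b)$. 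This brings the triple to the form \eqref{2x5triple1}–\eqref{2x5triple3}. The main obstacle is the bookkeeping in the middle step: one must check that allowing $\mathcal{H}^{(k)}_0=\mathcal{H}^{(k)}_1$ (so that fewer than six distinct $\mathbb{C}^{5}$-bases actually appear) does not break the argument, and that the uniqueness statement for the six MU bases in $\mathbb{C}^{5}$ is being applied to the correct notion of equivalence — namely that any set of pairwise MU bases in $\mathbb{C}^{5}$ that can be extended, or the relevant subset of the unique complete set, sits inside $\{\mathcal{G}_i\}$ after a single unitary. Everything else is a direct transcription of the arguments already given for $d=2^n$.
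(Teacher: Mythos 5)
Your overall strategy coincides with the paper's: isolate the unique orthogonal qubit pair in each basis via Lemma \ref{lem:orthog_tiple} and Theorem \ref{thm:generalised_pammer}, normalise the three pairs to the Pauli eigenbases exactly as in Corollary \ref{maximal_sets_p=00003D00003D2}, and then push the six $\mathbb{C}^{5}$-bases into the unique complete set using the classification of \cite{brierley_all}. However, the middle step contains two genuine problems. First, the three bases are \emph{not} in general direct product bases: a basis of the form $\{\ket{0}\otimes\mathcal{H}_{0},\ket{1}\otimes\mathcal{H}_{1}\}$ with $\mathcal{H}_{0}\neq\mathcal{H}_{1}$ is an \emph{indirect} product basis (the remark following the corollary stresses that precisely this case produces the inequivalent triples), so Lemma \ref{lem:zeilinger} does not apply to it. The statement you actually need --- that every state of $\mathcal{H}^{(k)}_{i}$ is MU to every state of $\mathcal{H}^{(l)}_{j}$ for $k\neq l$ and \emph{all} $i,j$ --- is supplied directly by Theorem \ref{thm:generalised_pammer}, which is what the paper invokes. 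Second, and more seriously, the inference ``thus the distinct $\mathbb{C}^{5}$-bases form a set of pairwise MU bases'' is false as stated: nothing at this stage forces $\mathcal{H}^{(k)}_{0}$ and $\mathcal{H}^{(k)}_{1}$, the two bases sitting inside the \emph{same} $\mathcal{B}_{k}$, to be mutually unbiased, and the only exceptional case you consider is that they coincide. Consequently you are not entitled to apply the uniqueness of the complete \emph{sextuple} to the collection of all six bases; the bookkeeping worry you flag at the end is exactly this unresolved point.

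The repair is the paper's actual argument: the six $\mathbb{C}^{5}$-bases are grouped into three \emph{pairs}, with any basis of one pair MU to any basis of a different pair, while bases within a pair are a priori unconstrained. One then applies the classification result of \cite{brierley_all} (every triple, quadruple, quintuple and sextuple of MU bases in $\mathbb{C}^{5}$ is a subset of the unique complete set) to triples such as $\{\mathcal{G}(1_{z}),\mathcal{G}(0_{x}),\mathcal{G}(0_{y})\}$, each of which is genuinely pairwise MU; this forces every one of the six bases into the same copy of $\{\mathcal{G}_{0},\dots,\mathcal{G}_{5}\}$, and only \emph{after} this step does it follow that the six bases are pairwise MU or equal. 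Your normalisation of the qubit factors is correct and matches the paper.
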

\begin{proof}
As we have already seen, every product basis of dimension $d=2q$
contains a pair of orthogonal states in the $\c 2$ subspace. For
a set of three MU product bases, each basis contains one unique pair,
given by the eigenstates of $\sigma_{x}$, $\sigma_{y}$ and $\sigma_{z}$.
To satisfy orthogonality, each state in $\c 2$ is paired with an
orthogonal basis in $\c 5$. These six bases in $\c 5$, according
to Theorem \ref{thm:generalised_pammer}, are grouped into three mutually
unbiased sets, $\{\mathcal{G}({0_z}),\mathcal{G}({1_z})\}$, $\{\mathcal{G}({0_x}),\mathcal{G}({1_x})\}$
and $\{\mathcal{G}({0_y}),\mathcal{G}({1_y})\}$. The bases within each
set are taken from the complete set of six MU bases $\mathcal{G}_{i}$,
$i=0\ldots5$. This follows from the fact that all inequivalent triples,
quadruples, quintuples and sextuples of MU bases in $\c 5$ are given
by subsets of the complete set \cite{brierley_all}. 
\end{proof}
Corollary \ref{maximal_sets_p=00003D00003D2x5} implies that several \emph{inequivalent} triples
of MU product bases exist. For example, the six bases in $\c 5$ may
be chosen such that $\mathcal{G}({0_z})=\mathcal{G}({1_z})$, $\mathcal{G}({0_x})=\mathcal{G}({1_x})$
and $\mathcal{G}({0_y})=\mathcal{G}({1_y})$, in which case $\mathcal{B}_{0}$,
$\mathcal{B}_{1}$, and $\mathcal{B}_{2}$ form direct product bases.
Alternatively, if none of the six bases coincide, three \emph{indirect}
MU product bases emerge.


\subsubsection*{Dimension $d=2^{k}d_{2}\ldots d_{n},k\in\mathbb{N}$}

Suppose we consider product bases of dimension $d=2^{k}d_{2}\ldots d_{n},k\in\mathbb{N}$,
in the space $({\c 2})^{\otimes k}\otimes\c{d_{2}}\otimes\ldots\otimes\c{d_{n}}$. We can generalize Corollary \ref{maximal_sets_p=00003D00003D2x5} as follows.
\begin{cor}
\label{maximal_sets_d=00003D2^kxD} In the space $\mathbb{C}^{d}$
with dimension $d=2^{k}d_{2}d_{3}\ldots d_{n}$, any triple of \emph{MU}
product bases must be of the form 
\begin{align}
\mathcal{B}_{0} & =\{\ket{j_{z}}\otimes\mathcal{G}({j_z})\},\label{2^kxD triple1}\\
\mathcal{B}_{1} & =\{\ket{j_{x}}\otimes\mathcal{G}({j_x})\},\label{2^kxD triple2}\\
\mathcal{B}_{2} & =\{\ket{j_{y}}\otimes\mathcal{G}({j_y})\},\label{2^kxD triple3}\end{align}
up to local equivalence transformations; here $\{\ket{j_{b}}, j=0\ldots (2^k-1)\}$, $b=z,x,y$, are the eigenstates of $\sigma_{z}^{\otimes k}$, $\sigma_{x}^{\otimes k}$ and $\sigma_{y}^{\otimes k}$, respectively, and $\mathcal{G}({j_b})$ are bases of $\c {d_2}\otimes\ldots\otimes \c {d_n}$ for each $j_b$, such that the three sets $\{\mathcal{G}({j_b}), j=0\ldots (2^k-1)\}$ are mutually unbiased. 
\end{cor}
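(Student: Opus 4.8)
The plan is to reduce the problem to the single-qubit analysis of Corollary~\ref{maximal_sets_p=00003D00003D2x5}, applied to each of the $k$ qubits in turn, then to bring the qubit factors into Pauli normal form by local unitaries, and finally to read off the structure of the remaining $\c{d_2}\otimes\ldots\otimes\c{d_n}$ factors via Lemma~\ref{lem:pammer}. First, fix $r\in\{1,\ldots,k\}$ and regard $\mathbb{C}^{d}$ as the bipartite space $\mathbb{C}^{2}\otimes\mathbb{C}^{d/2}$, the first factor being the $r$-th qubit. A triple of MU product bases of $\mathbb{C}^{d}$ is in particular a triple of MU product bases of this bipartite space, so the argument in the proof of Corollary~\ref{maximal_sets_p=00003D00003D2x5}, which uses only Lemma~\ref{lem:orthog_tiple} and Theorem~\ref{thm:generalised_pammer}, applies verbatim: each basis of the triple contains a single orthonormal pair in the $\mathbb{C}^{2}$ factor, every $r$-th qubit factor occurring in that basis belongs to this pair, and the three pairs $P^{(r)}_{0}\subset\mathcal{B}_{0}$, $P^{(r)}_{1}\subset\mathcal{B}_{1}$, $P^{(r)}_{2}\subset\mathcal{B}_{2}$ are pairwise MU.

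Next, since three pairwise MU bases of $\mathbb{C}^{2}$ are unique up to equivalence, there is a unitary $U_{r}$ on the $r$-th qubit taking $P^{(r)}_{0}$ to the $\sigma_{z}$-eigenbasis; the residual freedom, a rotation about the $z$-axis, is then tuned so that $P^{(r)}_{1}$ maps to the $\sigma_{x}$-eigenbasis, after which $P^{(r)}_{2}$ is automatically sent to the $\sigma_{y}$-eigenbasis, this being the unique basis MU to both. Applying $U_{1}\otimes\cdots\otimes U_{k}\otimes\mathbb{1}$ to the triple, a legitimate local equivalence, we may henceforth assume that for every $r$ the $r$-th qubit factors of $\mathcal{B}_{0}$, $\mathcal{B}_{1}$, $\mathcal{B}_{2}$ lie in $\{\ket{0_{z}},\ket{1_{z}}\}$, $\{\ket{0_{x}},\ket{1_{x}}\}$, $\{\ket{0_{y}},\ket{1_{y}}\}$, respectively. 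Writing $\mathcal{H}_{A}=(\mathbb{C}^{2})^{\otimes k}$ and $\mathcal{H}_{B}=\c{d_{2}}\otimes\ldots\otimes\c{d_{n}}$, the $\mathcal{H}_{A}$-factor of each vector of $\mathcal{B}_{0}$, $\mathcal{B}_{1}$, $\mathcal{B}_{2}$ is then one of the product vectors $\ket{j_{z}}$, $\ket{j_{x}}$, $\ket{j_{y}}$, respectively.

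Now group the $d=2^{k}d_{2}\cdots d_{n}$ vectors of $\mathcal{B}_{0}$ by their $\mathcal{H}_{A}$-factor; since vectors with the same $\mathcal{H}_{A}$-factor must have orthonormal $\mathcal{H}_{B}$-factors, a counting argument shows each $\ket{j_{z}}$ occurs exactly $\dim\mathcal{H}_{B}=d_{2}\cdots d_{n}$ times, its $\mathcal{H}_{B}$-factors forming an orthonormal basis $\mathcal{G}(j_{z})$ of $\mathcal{H}_{B}$, so $\mathcal{B}_{0}=\{\ket{j_{z}}\otimes\mathcal{G}(j_{z})\}$ and likewise $\mathcal{B}_{1}=\{\ket{j_{x}}\otimes\mathcal{G}(j_{x})\}$, $\mathcal{B}_{2}=\{\ket{j_{y}}\otimes\mathcal{G}(j_{y})\}$. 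Finally, for any $\ket{j_{x}}\otimes\ket{W}\in\mathcal{B}_{1}$ with $\ket{W}\in\mathcal{G}(j_{x})$, mutual unbiasedness of $\mathcal{B}_{0}$ and $\mathcal{B}_{1}$ says this vector is MU to the product basis $\mathcal{B}_{0}$ of $\mathcal{H}_{A}\otimes\mathcal{H}_{B}$; as $\ket{j_{x}}$ is automatically MU to every $\ket{j'_{z}}$, Lemma~\ref{lem:pammer} forces $\ket{W}$ to be MU to every $\mathcal{H}_{B}$-factor of $\mathcal{B}_{0}$, i.e.\ to every vector of every $\mathcal{G}(j'_{z})$. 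Running this over the pairs $(\mathcal{B}_{0},\mathcal{B}_{2})$ and $(\mathcal{B}_{1},\mathcal{B}_{2})$ as well shows each family $\{\mathcal{G}(j_{b})\}_{j}$, $b=z,x,y$, is mutually unbiased to the other two, which is the asserted form; Theorem~\ref{maximal_sets_d=00003D00003D2} moreover forbids a fourth MU product basis.

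The real content sits in the per-qubit reduction: one must be certain that no $\mathbb{C}^{2}$-factor lying outside the distinguished orthonormal pair survives in any of the three bases. This is exactly the place where mutual unbiasedness of \emph{all three} bases, not merely of two, is essential, and it is handled precisely as in the proof of Corollary~\ref{maximal_sets_p=00003D00003D2x5}; everything afterwards is bookkeeping.
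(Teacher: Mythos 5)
Your proof is correct and follows essentially the route the paper intends: the authors omit this proof, stating only that it ``closely follows'' that of Corollary~\ref{maximal_sets_p=00003D00003D2x5}, and your argument is precisely that reduction (Lemma~\ref{lem:orthog_tiple} plus Theorem~\ref{thm:generalised_pammer} applied to each qubit bipartition to isolate a unique, pairwise-MU orthonormal pair per qubit per basis), combined with the qubit normalization step from the proof of Corollary~\ref{maximal_sets_p=00003D00003D2}. The counting argument identifying the bases $\mathcal{G}(j_b)$ and the final application of Lemma~\ref{lem:pammer} to establish mutual unbiasedness of the three families are sound, and you rightly avoid the $\mathbb{C}^5$-specific claim (membership in a complete set) that does not generalize beyond Corollary~\ref{maximal_sets_p=00003D00003D2x5}.
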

Note that the three sets $\{\mathcal{G}({j_b}), j=0\ldots (2^k-1)\}$, $b=z,x,y$, are mutually unbiased if the bases within each set are mutually unbiased to the bases of the other two sets. The bases within each set need not be mutually unbiased.

\subsubsection*{Dimension $d=3^{k}d_{2}\ldots d_{n},k\in\mathbb{N}$}
By considering product bases of the space $(\mathbb{C}^3)^{\otimes k}\otimes \mathbb{C}^{d_2}\otimes\ldots\otimes\mathbb{C}^{d_n}$ we find that any set of four MU product bases has the following structure.
\begin{cor}
\label{maximal_sets_d=00003D3^kxD} In the space $\mathbb{C}^{d}$
with dimension $d=3^{k}d_{2}d_{3}\ldots d_{n}$, any quadruple of \emph{MU}
product bases must be of the form 
\begin{align}
\mathcal{B}_{0} & =\{\ket{J_{z}}\otimes\mathcal{G}({J_z})\},\label{3^kxD triple1}\\
\mathcal{B}_{1} & =\{\ket{J_{x}}\otimes\mathcal{G}({J_x})\},\label{3^kxD triple2}\\
\mathcal{B}_{2} & =\{\ket{J_{y}}\otimes\mathcal{G}({J_y})\},\label{3^kxD triple3}\\
\mathcal{B}_{3} & =\{\ket{J_{w}}\otimes\mathcal{G}({J_w})\},\label{3^kxD triple4}
\end{align}
up to local equivalence transformations; here $\{\ket{J_{b}}, J=0\ldots(3^{k}-1)\}$,
$b=z,x,y,w$, are the eigenstates of
$Z^{\otimes k}$, $X^{\otimes k}$, $XZ^{\otimes k}$ and ${XZ^{2}}^{\otimes k}$,
respectively, where $X$ and $Z$ are the Heisenberg-Weyl shift and phase operators in $\mathbb{C}^3$, and $\mathcal{G}({J_b})$ are bases of $\c {d_2}\otimes\ldots\otimes \c {d_n}$ for each $J_b$, such that the four sets $\{\mathcal{G}({J_b}), J=0\ldots(3^{k}-1)\}$ are mutually unbiased.
\end{cor}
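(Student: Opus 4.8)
The plan is to follow the template of the proofs of Corollaries~\ref{maximal_sets_p=00003D00003D3} and~\ref{maximal_sets_d=00003D2^kxD}: first fix the structure of the four bases on the $(\mathbb{C}^3)^{\otimes k}$ factor, then normalise it by equivalence transformations, and finally read off the structure on the passive factor $\mathbb{C}^{d_2}\otimes\cdots\otimes\mathbb{C}^{d_n}$ from orthonormality and the mutual-unbiasedness condition. (The statement is vacuous unless every $d_r\geq3$: if some subsystem has dimension two, Theorem~\ref{maximal_sets_d=00003D00003D2} already rules out a fourth MU product basis.)

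For the first step I would regard $\mathbb{C}^d$ as the bipartite space $\mathbb{C}^3\otimes\mathbb{C}^{d/3}$ using, in turn, each of the $k$ copies of $\mathbb{C}^3$. By Lemma~\ref{lem:orthog_tiple} (with $d_1=3$) every basis $\mathcal{B}_a$ of the quadruple contains, among the factors living in that copy, an orthonormal basis $O_a$ of $\mathbb{C}^3$. Theorem~\ref{thm:generalised_pammer} then forces every $\mathbb{C}^3$-factor occurring in $\mathcal{B}_a$ to be MU to all $\mathbb{C}^3$-factors of the other three bases, hence to the three pairwise-MU bases $\{O_b:b\neq a\}$; since a vector of $\mathbb{C}^3$ that is MU to three bases of a complete MU set must belong to the fourth --- the rigidity of the unique complete MU set of $\mathbb{C}^3$, exactly the ingredient used in the proof of Corollary~\ref{maximal_sets_p=00003D00003D3} --- it follows that every $\mathbb{C}^3$-factor of $\mathcal{B}_a$ lies in $O_a$, so $\mathcal{B}_a$ is a direct product in that copy, and $O_0,\dots,O_3$ form a complete MU quadruple of $\mathbb{C}^3$. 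Running this through all $k$ copies shows that the $(\mathbb{C}^3)^{\otimes k}$-part of each of the four bases is a full tensor-product basis whose $r$-th factor is a fixed MU basis of $\mathbb{C}^3$. This is the step I expect to be the real obstacle; it is precisely this appeal to the rigidity of complete MU sets in $\mathbb{C}^3$ that has no known analogue for subsystems of dimension four or more, which is why Conjecture~\ref{conj:maximal_sets} is still open there.

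Next I would normalise, exactly as in the proof of Corollary~\ref{maximal_sets_p=00003D00003D3}. Local unitaries turn the $(\mathbb{C}^3)^{\otimes k}$-part of $\mathcal{B}_0$ into the eigenbasis of $Z^{\otimes k}$; by uniqueness of the complete MU set of $\mathbb{C}^3$ the $r$-th qutrit factor of each of $\mathcal{B}_1,\mathcal{B}_2,\mathcal{B}_3$ is then an eigenbasis of $X$, $XZ$ or $XZ^2$. Applying, independently in each qutrit factor, the diagonal local unitaries that cyclically permute these three eigenbases while fixing the $Z$-eigenbasis up to phases moves $\mathcal{B}_1$ to the eigenbasis of $X^{\otimes k}$; a local complex conjugation in each qutrit factor swaps the $XZ$- and $XZ^2$-eigenbases and fixes the $Z$- and $X$-eigenbases, so $\mathcal{B}_2$ can be moved to the eigenbasis of $(XZ)^{\otimes k}$, which then forces $\mathcal{B}_3$ to be the eigenbasis of $(XZ^{2})^{\otimes k}$. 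The only care required here is that these transformations act on all four bases simultaneously, so they must be carried out in the order $\mathcal{B}_0$, then $\mathcal{B}_1$, then $\mathcal{B}_2$; none of them touches $\mathbb{C}^{d_2}\otimes\cdots\otimes\mathbb{C}^{d_n}$.

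For the last step, group the $3^{k}d_2\cdots d_n$ states of a given $\mathcal{B}_a$ according to their $(\mathbb{C}^3)^{\otimes k}$-part. For each of the $3^{k}$ eigenvectors $\ket{J_b}$, orthonormality of $\mathcal{B}_a$ makes the accompanying vectors of $\mathbb{C}^{d_2}\otimes\cdots\otimes\mathbb{C}^{d_n}$ an orthonormal system $\mathcal{G}(J_b)$, and a counting argument ($3^k$ systems, each of size at most $d_2\cdots d_n$, with sizes summing to $3^{k}d_2\cdots d_n$) forces every $\mathcal{G}(J_b)$ to be a full orthonormal basis. Since any two distinct ones among the eigenbases of $Z^{\otimes k},X^{\otimes k},(XZ)^{\otimes k},(XZ^{2})^{\otimes k}$ are MU --- being $k$-fold tensor powers of pairwise-MU bases of $\mathbb{C}^3$ --- the overlap of a state $\ket{J_b,g}\in\mathcal{B}_a$ with a state $\ket{J'_{b'},g'}\in\mathcal{B}_{a'}$ for $a\neq a'$ factorises as $|\bk{J_b}{J'_{b'}}|^2\,|\bk{g}{g'}|^2=\tfrac{1}{3^{k}}|\bk{g}{g'}|^2$. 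Hence $\mathcal{B}_a$ and $\mathcal{B}_{a'}$ are MU if and only if each basis of $\{\mathcal{G}(J_b):J\}$ is MU to each basis of $\{\mathcal{G}(J'_{b'}):J'\}$; that is, the four sets of bases are mutually unbiased in the stated sense, with no constraint internal to any one set --- which is exactly the announced form.
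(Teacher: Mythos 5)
Your proposal is correct and follows essentially the route the paper intends: the paper omits this proof, saying only that it closely follows that of Corollary~\ref{maximal_sets_p=00003D00003D2x5}, and your argument is exactly that template --- Lemma~\ref{lem:orthog_tiple} plus Theorem~\ref{thm:generalised_pammer} to pin down the qutrit factors, the normalisation by local unitaries and complex conjugation taken from Corollary~\ref{maximal_sets_p=00003D00003D3}, and the grouping of the accompanying factors into four mutually unbiased sets of bases. Your explicit appeals to the rigidity of the complete MU set in $\mathbb{C}^{3}$ and the counting argument showing each $\mathcal{G}(J_b)$ is a full basis simply make precise steps the paper leaves implicit.
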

We omit the proofs of Corollaries \ref{maximal_sets_d=00003D2^kxD} and \ref{maximal_sets_d=00003D3^kxD} since they closely follow the proof of Corollary \ref{maximal_sets_p=00003D00003D2x5}.

\section{Vectors mutually unbiased to MU product bases}

\label{sec:max_entangled_vectors}

In a bipartite system with dimension $d=pq$, complete sets of MU
bases come with a fixed amount of entanglement \cite{wiesniak11}
which implies an upper bound on the number of MU \emph{product} bases
in a complete set: the space $\mathbb{C}^{pq}$ can accommodate at
most $(p+1)$ MU product bases for any pair of prime numbers
satisfying $p\leq q$. In addition, all of the remaining states must
be maximally entangled. If, for example, a hypothetical complete set
in dimension $d=2\times3$ contained \emph{three} MU product bases,
the other four bases would be maximally entangled.

Furthermore, it has been shown for $d=6$ that any vector MU to a
set of three MU product bases is maximally entangled \cite{mcnulty+12d}.
We will now generalize this property: a vector $\ket{\mu}\in\mathbb{C}^{d}$
of an $n$-partite qudit system with $d=d_{1}d_{2}\ldots d_{n}$ is
mutually unbiased to a set of $(d_{1}+1)$ MU product bases only if
$\ket{\mu}$ is maximally entangled. 
\begin{lem}
\label{lem:maximally_entangled} Let $d=d_{1}\ldots d_{n}$ with $d_{r}=p_{r}^{k_{r}}$,
$p_{r}$ prime and $k_{r}\in\mathbb{N}$, $r=1\ldots n$, such that
$d_{1}\leq\ldots\leq d_{n}$. A vector $\ket{\mu}$, mutually unbiased
to a set of $(d_{1}+1)$ \emph{MU }product bases \emph{(}where the product bases of $\mathbb{C}^d$ contain at least one orthogonal set of $d_1$ vectors in the subsystem $\c {d_1}$\emph{)}, is maximally entangled
across $\c{d_{1}}\otimes\c{d_{\overline{1}}}$, with $d_{\overline{1}}=d/d_{1}$.\end{lem}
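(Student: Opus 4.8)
The plan is to fix a vector $\ket{\mu}$ that is MU to a maximal set of $(d_1+1)$ MU product bases and to compute the reduced density operator $\rho_1 = \mathrm{tr}_{\overline{1}}\kb{\mu}{\mu}$ on $\c{d_1}$, showing it equals $\mathbb{1}/d_1$; this is exactly the statement that $\ket{\mu}$ is maximally entangled across $\c{d_1}\otimes\c{d_{\overline{1}}}$. First I would invoke Theorem \ref{thm:generalised_pammer}: since $\ket{\mu}$ is MU to each product basis $\mathcal{B}^{(t)}$, $t=0\ldots d_1$, and since each such basis (by hypothesis, or by Lemma \ref{lem:orthog_tiple} when $d_1\in\{2,3\}$) contains an orthonormal set $\{\ket{a^{(t)}_s}, s=1\ldots d_1\}$ in the subsystem $\c{d_1}$, it follows that the factor $\ket{\mu^{1}}$ —- wait, $\ket{\mu}$ need not be a product state, so I should be careful here. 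The correct route is: apply the trace identity from the proof of Lemma \ref{lem:pammer} directly to $\kb{\mu}{\mu}$ rather than to a product state.

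Concretely, for a single product basis $\mathcal{B} = \{\ket{\psi_i^1,\psi_i^{\overline 1}}\}$ and an arbitrary operator $N$ on $\c{d_1}$, expand
\begin{equation}
\mathrm{tr}\bigl((N\otimes\mathbb{1})\,\kb{\mu}{\mu}\bigr) = \sum_{i=1}^{d}\bra{\mu}\,(N\otimes\mathbb{1})\,\kb{\psi_i^1,\psi_i^{\overline 1}}{\psi_i^1,\psi_i^{\overline 1}}\,\ket{\mu}\,,
\end{equation}
and the key point will be that, restricting the sum to the sub-basis $\mathcal{B}_\kappa$ of $d_1$ vectors with orthonormal first factors $\{\ket{a_s}\}$ and using that $\ket{\mu}$ is MU to $\mathcal{B}$ (so $|\bracket{\psi_i^1,\psi_i^{\overline 1}}{\mu}|^2 = 1/d$ for all $i$, and more refined information from Lemma \ref{lem:pammer}-type arguments about the $\c{d_1}$ overlaps being $1/d_1$), one extracts $\mathrm{tr}_1(N\rho_1) = \frac{1}{d_1}\sum_{s=1}^{d_1}\bra{a_s}N\ket{a_s} = \frac{1}{d_1}\mathrm{tr}_1 N$ along each of the $(d_1+1)$ MU orthonormal frames in $\c{d_1}$. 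Since a complete set of $(d_1+1)$ MU bases of $\c{d_1}$ spans the operator space and the functionals $N\mapsto\frac{1}{d_1}\sum_s\bra{a^{(t)}_s}N\ket{a^{(t)}_s}$ all agree with $\frac{1}{d_1}\mathrm{tr}_1 N$, I would conclude $\mathrm{tr}_1(N\rho_1) = \frac{1}{d_1}\mathrm{tr}_1 N$ for all $N$, hence $\rho_1 = \mathbb{1}/d_1$.

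The main obstacle, and the step that needs the most care, is justifying that along each product basis the $\c{d_1}$-overlaps $|\bracket{a^{(t)}_s}{\mu^{1}}|$-type quantities behave as in the product case even though $\ket{\mu}$ is entangled. The resolution is that one never needs $\ket{\mu}$ to be a product state: the trace identities \eqref{eq: tr M 1}–\eqref{eq: tr M 2} and their Step-2 analogue in the proof of Lemma \ref{lem:orthog_tiple} hold for any operator inserted in the second tensor factor, so applying them with $\kb{\mu}{\mu}$ (or rather its partial trace) in place of the identity, together with the MU condition $|\bracket{\psi_i^1,\psi_i^{\overline1}}{\mu}|^2=1/d$, pins down $\rho_1$ frame by frame. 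I would also record that the hypothesis in parentheses (each product basis contains an orthogonal $d_1$-set in $\c{d_1}$) is automatic for $d_1=2,3$ by Lemma \ref{lem:orthog_tiple}, so the lemma is unconditional in those cases, which is what is used in Section \ref{sec:classification}. Finally I would remark that maximal entanglement across $\c{d_1}\otimes\c{d_{\overline1}}$ is equivalent to $\rho_1$ being maximally mixed, completing the argument.
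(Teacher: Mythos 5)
Your overall strategy is the same as the paper's: reduce the claim to showing that $\rho_{1}=\text{tr}_{\overline{1}}\kb{\mu}{\mu}$ equals $\mathbb{1}/d_{1}$, establish that the diagonal of $\rho_{1}$ is flat in each of the $(d_{1}+1)$ mutually unbiased orthonormal frames of $\mathbb{C}^{d_{1}}$, and conclude by tomographic completeness of a complete set of MU bases (the paper does this last step via Ivanovi\'{c}'s reconstruction formula \eqref{eq:density_matrix}, which is precisely your ``spanning the operator space'' argument). You are also right that $\ket{\mu}$ need not be a product state, and that the parenthetical hypothesis is automatic for $d_{1}\in\{2,3\}$ by Lemma \ref{lem:orthog_tiple}.

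The gap is in the middle step, i.e.\ in how you extract $\bra{a_{s}}\rho_{1}\ket{a_{s}}=1/d_{1}$. The expansion $\text{tr}\bigl((N\otimes\mathbb{1})\kb{\mu}{\mu}\bigr)=\sum_{i}\bra{\mu}(N\otimes\mathbb{1})\ket{\psi_{i}^{1},\psi_{i}^{\overline{1}}}\bk{\psi_{i}^{1},\psi_{i}^{\overline{1}}}{\mu}$ does not deliver this: the individual terms contain the matrix elements $\bra{\mu}(N\otimes\mathbb{1})\ket{\psi_{i}^{1},\psi_{i}^{\overline{1}}}$, which are complex numbers not controlled by the MU condition $|\bk{\psi_{i}^{1},\psi_{i}^{\overline{1}}}{\mu}|^{2}=1/d$, and you cannot ``restrict the sum to the sub-basis $\mathcal{B}_{\kappa}$'' because the discarded terms do not vanish --- even for $N=\kb{a_{s}}{a_{s}}$, the vectors $(N\otimes\mathbb{1})\ket{\psi_{j}^{1},\psi_{j}^{\overline{1}}}$ with $\ket{\psi_{j}^{1}}$ not orthogonal to $\ket{a_{s}}$ are nonzero and not proportional to $\ket{\psi_{j}^{1},\psi_{j}^{\overline{1}}}$. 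Likewise the identities \eqref{eq: tr M 1}--\eqref{eq: tr M 2} constrain the basis $\mathcal{B}$ itself, not its overlaps with $\ket{\mu}$. The missing structural ingredient, which the paper supplies before Eq.~\eqref{eq:sum_over_v}, is that each of the $(d_{1}+1)$ product bases can be written as $\{\ket{v_{b},\overline{v}(v_{b})}\}$: it splits into $d_{1}$ groups of $d_{\overline{1}}$ vectors sharing a common first factor $\ket{v_{b}}$, the second factors of each group forming an orthonormal basis of $\mathbb{C}^{d_{\overline{1}}}$. (This is \emph{not} implied by the parenthetical hypothesis alone --- e.g.\ $\{\ket{0,c},\ket{1,c},\ket{+,e},\ket{-,e}\}$ with $\bk{c}{e}=0$ is a product basis of $\mathbb{C}^{4}$ containing an orthogonal pair of first factors but lacking the grouping --- it follows from Theorem \ref{thm:generalised_pammer} applied to the vectors of the other $d_{1}$ bases, which forces a single orthonormal frame in $\mathbb{C}^{d_{1}}$ per basis, after which orthogonality and completeness force the grouping.) Only with this structure does the partial trace collapse onto a single group,
\begin{equation*}
\bra{v_{b}}\rho_{1}\ket{v_{b}}=\sum_{\overline{v}=1}^{d_{\overline{1}}}\bigl|\bk{v_{b},\overline{v}(v_{b})}{\mu}\bigr|^{2}=\frac{d_{\overline{1}}}{d}=\frac{1}{d_{1}}\,,
\end{equation*}
using nothing beyond the MU condition for the $d_{\overline{1}}$ vectors of that group. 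With this replacement for your middle step, the remainder of your argument goes through as written.
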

\begin{proof}
Let us consider the $n$-partite system as a bipartite system with
state space $\mathbb{C}^{d_{1}}\otimes\mathbb{C}^{d_{\overline{1}}}$,
where $d_{\overline{1}}=d/d_{1}$. Following from Theorem \ref{thm:generalised_pammer} we write the set of $(d_{1}+1)$
MU product bases as $\mathcal{B}_{b}=\{\ket{v_{b},\overline{v}(v_{b})}, v=1\ldots d_{1}, \overline{v}=1\ldots d_{\overline{1}}\}$,
with $b=0\dots d_{1}$, such that $\{\ket{v_{b}}\}$ is an orthonormal basis of $\mathbb{C}^{d_{1}}$ for each $b$ and $\{\ket{\overline{v}(v_{b})}\}$ is an orthonormal basis of $\mathbb{C}^{d_{\overline{1}}}$ for each $v$ and $b$.

The unit vector $\ket{\mu}$ is MU to the product bases if the $d(d_{1}+1)$
equations 
\begin{equation}
|\bk{v_{b},\overline{v}(v_{b})}{\mu}|^{2}=\frac{1}{d}\,,
\end{equation}
are satisfied. Summing over all values of $\overline{v}$, we find
\begin{equation}
\sum_{\overline{v}=1}^{d_{\overline{1}}}|\bk{v_{b},\overline{v}(v_{b})}{\mu}|^{2}=\bra{v_{b}}(\text{tr}_{\overline 1}\kb{\mu}{\mu})\ket{v_{b}}=\bra{v_{b}}\rho_{1}\ket{v_{b}}=\frac{1}{d_{1}}\,,\label{eq:sum_over_v}
\end{equation}
where $\rho_{1}=\text{tr}_{\overline 1}\kb{\mu}{\mu}$ is the reduced density
matrix of the first subsystem, given by the partial trace of $\kb{\mu}{\mu}$
over the second subsystem.

We now show that Eqs. (\ref{eq:sum_over_v}) can only hold if the
state $\rho_{1}$ is maximally mixed. To see this, we rewrite $\rho_{1}$
in terms of a complete set of MU bases \cite{ivanovic81}, i.e. 
\begin{equation}
\rho_{1}=\sum_{b=0}^{d_{1}}\sum_{v=1}^{d_{1}}p_{b}^{v}\ket{v_{b}}\bra{v_{b}}-\mathbb{1},\label{eq:density_matrix}
\end{equation}
where $p_{b}^{v}\equiv\bra{v_{b}}\rho_{1}\ket{v_{b}}=1/d_{1}$ for
all $v$ and $b$, according to Eq. \eqref{eq:sum_over_v}. Using
$\sum_{v=1}^{d_{1}}p_{b}^{v}\kb{v_{b}}{v_{b}}=\mathbb{1}/d_{1}$ for
each basis, we find that Eq. (\ref{eq:density_matrix}) reduces to
\begin{equation}
\rho_{1}=\frac{1}{d_{1}}\mbox{\ensuremath{\mathbb{1}}}\,,
\end{equation}
which means that the state $\rho_{1}$ is maximally mixed, completing
the proof of Lemma \ref{lem:maximally_entangled}.
\end{proof}
For the special case of $d=p^{n}$, a stronger restriction on the
set of mutually unbiased vectors can be found. 
\begin{lem}
\label{lem:maximally_entangled_prime_power} Let $d=d_{1}\ldots d_{n}=p^{n}$
with $d_{r}=p$, $r=1\ldots n$, and a prime number $p$. A vector
$\ket{\mu}$, mutually unbiased to a set of $(p+1)$ \emph{MU} product bases
\emph{(}where the product bases of $\mathbb{C}^d$ contain at least one orthogonal set of $d_r$ vectors in each subsystem $\c {d_r}$\emph{)},
is maximally entangled across all bipartitions $\mathbb{C}^{p}\otimes\mathbb{C}^{p^{n-1}}$.\end{lem}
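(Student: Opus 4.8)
The plan is to reduce Lemma \ref{lem:maximally_entangled_prime_power} to repeated application of Lemma \ref{lem:maximally_entangled}. Fix an arbitrary bipartition $\mathbb{C}^{p}\otimes\mathbb{C}^{p^{n-1}}$, say separating the $r$-th qudit from the rest, and write $d_{\overline r}=p^{n-1}$. By hypothesis, each of the $(p+1)$ MU product bases contains an orthogonal set of $p$ vectors in the subsystem $\c{d_r}$; equivalently, viewing the basis as a bipartite product basis of $\c{p}\otimes\c{p^{n-1}}$, Theorem \ref{thm:generalised_pammer} together with this orthogonality assumption lets us write each basis in the form $\mathcal{B}_b=\{\ket{v_b,\overline v(v_b)}\}$ used in the proof of Lemma \ref{lem:maximally_entangled}. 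Since $d_r=p$ is prime, the number of MU product bases matches the value $(d_r+1)$ appearing in that lemma. Hence Lemma \ref{lem:maximally_entangled}, applied with the first subsystem taken to be $\c{d_r}$ rather than $\c{d_1}$, yields directly that the reduced density matrix $\rho_r=\mathrm{tr}_{\overline r}\kb{\mu}{\mu}$ equals $\mathbb{1}/p$, i.e. $\ket{\mu}$ is maximally entangled across $\c p\otimes\c{p^{n-1}}$ for that bipartition.

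The only subtlety is that Lemma \ref{lem:maximally_entangled} was phrased with the singled-out factor being the one of smallest dimension, $d_1$. Here all factors have the same dimension $p$, so relabelling the $r$-th factor as ``the first'' costs nothing: the ordering $d_1\leq\ldots\leq d_n$ in the statement of Lemma \ref{lem:maximally_entangled} is used only to guarantee that $(d_1+1)$ does not exceed the number of MU bases supported by any subsystem, which is automatic when every $d_r=p$. I would therefore state explicitly that Lemma \ref{lem:maximally_entangled} applies verbatim to each bipartition $\c p\otimes\c{p^{n-1}}$, with the role of $\c{d_1}$ played in turn by each $\c{d_r}$, $r=1\ldots n$.

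Running this argument for every $r=1\ldots n$ then shows that $\rho_r=\mathbb{1}/p$ for all $r$, which is exactly the assertion that $\ket{\mu}$ is maximally entangled across all bipartitions $\mathbb{C}^{p}\otimes\mathbb{C}^{p^{n-1}}$. This completes the proof. The main (and really the only) obstacle is bookkeeping: making sure the orthogonality hypothesis ``the product bases contain an orthogonal set of $d_r$ vectors in each subsystem $\c{d_r}$'' is precisely what is needed to invoke Theorem \ref{thm:generalised_pammer} in the form required by Lemma \ref{lem:maximally_entangled} for each choice of distinguished factor; no new estimates or computations are needed.
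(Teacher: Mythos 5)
Your proposal is correct and follows exactly the paper's own argument: the paper likewise proves this lemma by applying Lemma \ref{lem:maximally_entangled} to each of the $n$ bipartitions $\mathbb{C}^{p}\otimes\mathbb{C}^{p^{n-1}}$. Your extra remarks on relabelling the distinguished factor and on the role of the orthogonality hypothesis are sound and merely make explicit what the paper leaves implicit.
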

\begin{proof}
To show that $\ket{\mu}$ is maximally entangled, we apply Lemma \ref{lem:maximally_entangled}
to each of the $n$ possible bipartition $\mathbb{C}^{p}\otimes\mathbb{C}^{p^{n-1}}$.
Hence, the state $\ket{\mu}$ is maximally entangled across all such
bipartitions.
\end{proof}
It is interesting to compare the content of Lemma \ref{lem:maximally_entangled}
with results known for the cases $d=2d_{2}$ and $d=3d_{2}$. To do
so we adapt Lemma \ref{lem:maximally_entangled} accordingly.
\begin{cor}
\label{cor:maximally_entangled} Suppose that $d=d_{1}d_{2}$ with
$d_{1}=2$ or $d_{1}=3$, $d_{2}$ prime, and $d_{2}\geq d_{1}$.
Any vector $\ket{\mu}$, mutually unbiased to a set of $(d_{1}+1)$
\emph{MU} product bases of dimension $d$, is maximally entangled. 
\end{cor}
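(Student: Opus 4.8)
The plan is to read Corollary~\ref{cor:maximally_entangled} as the bipartite instance of Lemma~\ref{lem:maximally_entangled}, so that the only genuine task is to check that the parenthetical hypothesis of that lemma is automatically satisfied when $d_1\in\{2,3\}$. First I would verify the dimensional assumptions of Lemma~\ref{lem:maximally_entangled}: here $n=2$, the dimension $d_1\in\{2,3\}$ is itself prime (so $d_1=p_1^{k_1}$ with $k_1=1$), $d_2$ is prime ($d_2=p_2^{k_2}$ with $k_2=1$), and $d_1\le d_2$ holds by assumption. Thus $d=d_1d_2$ with each factor a prime power and $d_1\le d_2$, exactly the setting of the lemma.

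The step that does the real work is discharging the extra condition in Lemma~\ref{lem:maximally_entangled}, namely that each of the $(d_1+1)$ MU product bases of $\c d$ contains an orthogonal set of $d_1$ vectors in the subsystem $\c{d_1}$. This is precisely the conclusion of Lemma~\ref{lem:orthog_tiple} for $d_1=2$ or $d_1=3$: regarding each MU product basis of $\c{d_1 d_2}$ as a bipartite product basis of $\c{d_1}\otimes\c{d_2}$, Lemma~\ref{lem:orthog_tiple} produces, among the first factors of its $d$ states, a subset of $d_1$ vectors forming an orthonormal basis of $\c{d_1}$. Hence the hypothesis of Lemma~\ref{lem:maximally_entangled} is met for every basis in the triple (or quadruple).

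Applying Lemma~\ref{lem:maximally_entangled} to the bipartition $\c{d_1}\otimes\c{d_{\overline 1}}=\c{d_1}\otimes\c{d_2}$ then shows that any $\ket\mu$ mutually unbiased to the given $(d_1+1)$ MU product bases has reduced state $\rho_1=\tfrac{1}{d_1}\mathbb 1$, i.e.\ is maximally entangled across $\c{d_1}\otimes\c{d_2}$; since the system is bipartite, this is simply the statement that $\ket\mu$ is maximally entangled, which is Corollary~\ref{cor:maximally_entangled}. I do not expect a real obstacle: the argument is a direct specialization, and the only point requiring attention is the appeal to Lemma~\ref{lem:orthog_tiple} to eliminate the parenthetical assumption --- a lemma that was established precisely for the values $d_1=2$ and $d_1=3$ at issue here. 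One should merely note in passing that ``maximally entangled across the bipartition $\c{d_1}\otimes\c{d_{\overline 1}}$'' coincides with the usual notion of a maximally entangled vector because this two-part system admits only one bipartition.
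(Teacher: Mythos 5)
Your proposal is correct and matches the paper's intent exactly: the paper offers no separate proof of Corollary~\ref{cor:maximally_entangled}, stating only that one ``adapts Lemma~\ref{lem:maximally_entangled} accordingly,'' and the adaptation is precisely what you describe --- the parenthetical hypothesis of that lemma is discharged by Lemma~\ref{lem:orthog_tiple} for $d_1=2$ or $d_1=3$, after which the bipartite specialization is immediate. Nothing further is needed.
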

This statement is stronger than the one given in \cite{wiesniak11}
which states that, given a hypothetical complete set of $(d+1)$ MU
bases in dimension $d=d_{1}d_{2}$ containing $(d_{1}+1)$ MU product
bases, the remaining vectors must be maximally entangled. Corollary \ref{cor:maximally_entangled} is valid without assuming the existence of a complete set.

For dimension $d=6$, Corollary 6 implies that no vector is mutually unbiased to a set of three MU product bases \cite{mcnulty+12d}. We expect similar results to hold for larger product dimensions such as $d=2\times 5$, but we have not been able to generalize the proof for $d=6$.

Finally, let us make explicit Lemma \ref{lem:maximally_entangled_prime_power}
for the case of $n$ qubits or qutrits, i.e. $d=p^{n}$, with $p=2$
or $p=3$.
\begin{cor}
\label{cor:maximally_entangled_d=00003D00003D2n} Any vector $\ket{\mu}$,
mutually unbiased to a set of $(p+1)$ \emph{MU} product bases in
dimension $d=p^{n}$, with $p=2$ or $p=3$, is maximally entangled
with respect to every partition $\mathbb{C}^{p}\otimes\mathbb{C}^{p^{n-1}}$. 
\end{cor}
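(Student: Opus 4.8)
The plan is to observe that Corollary~\ref{cor:maximally_entangled_d=00003D00003D2n} is the specialisation of Lemma~\ref{lem:maximally_entangled_prime_power} to $p=2$ and $p=3$, and that for these two values the parenthetical hypothesis of that lemma --- namely that each product basis contains an orthogonal set of $d_{r}$ vectors in every subsystem $\mathbb{C}^{d_{r}}$ --- is automatic rather than an extra assumption. So the proof reduces to checking this hypothesis and then invoking Lemma~\ref{lem:maximally_entangled_prime_power} verbatim.

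First I would recall, as in the proof of Theorem~\ref{thm:generalised_pammer}, that any product basis $\mathcal{B}$ of $\mathbb{C}^{d}$ with $d=p^{n}$ may be viewed, for each fixed $r\in\{1,\ldots,n\}$, as a bipartite product basis of $\mathbb{C}^{p}\otimes\mathbb{C}^{p^{n-1}}$ in which the first factor is the $r$-th qudit and the second factor collects the remaining $n-1$ qudits. Since $p=2$ or $p=3$, Lemma~\ref{lem:orthog_tiple} applies to this bipartition and guarantees that the $d$ vectors occurring in the $r$-th subsystem contain an orthonormal basis of $\mathbb{C}^{p}$. Running this argument over all $r=1,\ldots,n$ shows that every one of the $(p+1)$ MU product bases contains an orthogonal set of $p$ vectors in each subsystem $\mathbb{C}^{p}$, i.e.\ exactly the condition appearing in parentheses in Lemma~\ref{lem:maximally_entangled_prime_power}.

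With that condition verified, Lemma~\ref{lem:maximally_entangled_prime_power} applies directly and yields that $\ket{\mu}$ is maximally entangled across every bipartition $\mathbb{C}^{p}\otimes\mathbb{C}^{p^{n-1}}$, which is the claim of the corollary. Since the proof merely chains together Lemma~\ref{lem:orthog_tiple} and Lemma~\ref{lem:maximally_entangled_prime_power}, there is essentially no technical obstacle; the only point deserving a word of care is the legitimacy of applying Lemma~\ref{lem:orthog_tiple}, which is phrased for bipartite product bases, to each of the $n$ bipartitions of $\mathbb{C}^{p^{n}}$ --- but this is precisely the reinterpretation already used in Theorem~\ref{thm:generalised_pammer} and in the proof of Theorem~\ref{maximal_sets_d=00003D00003D2}.
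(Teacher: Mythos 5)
Your proposal is correct and matches the paper's (implicit) reasoning: the paper states this corollary as a direct specialisation of Lemma~\ref{lem:maximally_entangled_prime_power}, with the parenthetical hypothesis automatically satisfied for $p=2,3$ via Lemma~\ref{lem:orthog_tiple} applied to each bipartition $\mathbb{C}^{p}\otimes\mathbb{C}^{p^{n-1}}$. Your write-up simply makes that verification explicit, which is exactly the intended argument.
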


\section{Conclusions}

\label{sec:conclusions}

In this paper we investigated the relationship between product bases
and mutually unbiased bases for multipartite systems. Our first main
result is Theorem \ref{thm:generalised_pammer} which states that,
for \emph{any} dimension $d=d_{1}\ldots d_{n}$, a product vector
$\ket{\mu}$ is mutually unbiased to a product basis if and only if
the $r$-th factor of $\ket{\mu}$ is mutually unbiased to the $r$-th factor of each vector
present in the basis. This result considerably
generalizes what had been known before, for bipartite systems with
dimension four or six \cite{mcnulty_all}.

We also derived a tight upper bound on the number of MU product bases
in any composite dimension if at least one subsystem has dimension
two or three (Theorem \ref{maximal_sets_d=00003D00003D2}). We expect a similar bound
to hold in general, i.e. for \emph{all} composite dimensions, as described in Conjecture \ref{conj:maximal_sets}. One way to prove the conjecture
would be to show that a product basis of dimension $d=d_{1}d_{2}\ldots d_{n}$
contains an orthonormal set of $d_{r}$ states in the subspace $\mathbb{C}^{d_{r}}$,
for all $r=1\ldots n$ --- which we consider highly plausible.

Theorem \ref{thm:generalised_pammer} and Lemma \ref{lem:orthog_tiple}
allow us to classify all maximal sets of MU product bases in dimensions
$d=2^{n}$ and $d=3^{n}$. Somewhat surprisingly, only one triple
of MU product bases exists in dimension $d=2^{n}$ according to Corollary
\ref{maximal_sets_p=00003D00003D2}, and only one quadruple exists
for $d=3^{n}$ (Corollary \ref{maximal_sets_p=00003D00003D3}). Furthermore,
we have shown that \emph{inequivalent} triples of MU product bases
exist if $d=2\times5$, complementing a result of \cite{mcnulty_all}
which finds two such triples if $d=2\times3$.

Finally, we analysed the entanglement structure of vectors mutually
unbiased to product bases. We find that vectors mutually unbiased
to maximal sets of MU product bases must be maximally entangled (Lemmas
\ref{lem:maximally_entangled} and \ref{lem:maximally_entangled_prime_power}).
If one of the subsystems has dimension two or three, this result generalizes
to \emph{all} maximal sets of MU product bases (Corollaries \ref{cor:maximally_entangled}
and \ref{cor:maximally_entangled_d=00003D00003D2n}). This fact is
in line with the bipartite case $d=2\times3$ for which any vector
mutually unbiased to a set of three MU product bases had been shown
to be maximally entangled \cite{mcnulty2}.

We conclude by noting that all the evidence available to us points
to a natural and beautiful structure of orthogonal product bases in
multipartite quantum systems. For simplicity, we restrict ourselves
to the bipartite case.
\begin{conjecture}
\label{conj:prod_bases_structure} The set ${\cal B}=\left\{ \ket{a_{i},b_{i}},i=1\ldots d\right\} $
is an orthonormal product basis of the space $\c d$, with $d=d_{1}d_{2}$,
if and only if the $d$ vectors $\left\{ \ket{a_{i}}\in\c{d_{1}},i=1\ldots d\right\} $
and the $d$ vectors $\left\{ \ket{b_{i}}\in\c{d_{2}},i=1\ldots d\right\} $
can be grouped into $d_{2}$ orthonormal bases ${\cal B}_{i_{2}}(d_{1}),i_{2}=1\ldots d_{2}$,
and $d_{1}$ orthonormal bases ${\cal B}_{i_{1}}(d_{2}),i_{1}=1\ldots d_{1}$,
respectively. 
\end{conjecture}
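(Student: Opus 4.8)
We have not been able to settle Conjecture~\ref{conj:prod_bases_structure}, but we indicate the line of attack we consider most promising. The plan is to treat the two implications separately, with the forward one carrying essentially all of the difficulty. For the converse, suppose the index set $\{1\ldots d\}$ carries a joint labelling $i\leftrightarrow(i_{1},i_{2})$, with $i_{1}=1\ldots d_{1}$ and $i_{2}=1\ldots d_{2}$, such that for each fixed $i_{2}$ the vectors $\ket{a_{i}}$ with that label form the orthonormal basis $\mathcal{B}_{i_{2}}(d_{1})$ of $\c{d_{1}}$, and for each fixed $i_{1}$ the vectors $\ket{b_{i}}$ with that label form the orthonormal basis $\mathcal{B}_{i_{1}}(d_{2})$ of $\c{d_{2}}$. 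Then any two distinct indices $i\neq j$ agree in their first label or in their second label; in the first case $\bk{b_{i}}{b_{j}}=0$ and in the second $\bk{a_{i}}{a_{j}}=0$, so in either case $\bk{a_{i},b_{i}}{a_{j},b_{j}}=0$ and the $d$ normalised product vectors constitute an orthonormal product basis of $\c{d}$. We stress that this argument genuinely uses the \emph{joint} labelling: a grouping of the $\ket{a_{i}}$ and a grouping of the $\ket{b_{i}}$ that are not tied to a common index set need not produce an orthonormal set, so the two groupings in the statement are to be understood as compatible in this sense.

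For the forward direction it suffices, by the symmetry between the two tensor factors, to prove that the $a$-parts of an arbitrary orthogonal product basis of $\c{d_{1}}\otimes\c{d_{2}}$ can be partitioned into $d_{2}$ orthonormal bases of $\c{d_{1}}$; the analogous partition of the $b$-parts then follows by exchanging the roles of the factors, and the compatibility of the two partitions is extracted afterwards from the orthogonality relations. I would attack this by iterating and strengthening the trace-identity technique behind Lemma~\ref{lem:orthog_tiple}. Starting from a chosen vector $\ket{a_{\kappa},b_{\kappa}}$, one splits $\mathcal{B}$ according to $2$-orthogonality to it and uses $\mathrm{tr}_{1}M=\mathrm{tr}\bigl(M\otimes\kb{b_{\kappa}}{b_{\kappa}}\bigr)$ to force $\ket{a_{\kappa}}$, together with the $a$-parts of the vectors that are $1$- but not $2$-orthogonal to $\ket{a_{\kappa},b_{\kappa}}$, to span $\c{d_{1}}$; iterating the refined identity of Step~2 of that proof should pin down a full orthonormal $a$-frame of size $d_{1}$. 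The crucial claim is then that, after deleting the $d_{1}$ product vectors carrying these $a$-parts, the remaining $d-d_{1}$ product vectors still form an orthogonal product basis of a tensor-product subspace $\c{d_{1}}\otimes V$ with $\dim V=d_{2}-1$, so that an induction on $d_{2}$ completes the argument; Theorem~\ref{thm:generalised_pammer} can then transfer the conclusion to multipartite factorisations.

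The hard part will be exactly where the paper already locates it in the discussion of Conjecture~\ref{conj:maximal_sets}, and two points conspire. First, the Step~2 refinement was tailored to $d_{1}=3$, where the set of vectors not $2$-orthogonal to a fixed one is tiny; for $d_{1}\geq4$ one must work through chains of mutually non-$2$-orthogonal vectors and control the combinatorics of which $b$-parts are pairwise non-orthogonal, and the single-step trace identities no longer close the argument. Second, and more fundamentally, proving that the complement of a peeled-off $a$-frame again factorises as $\c{d_{1}}\otimes V$ is a rigidity statement whose difficulty is that the realisable orthogonality patterns of unit vectors in $\c{d_{1}}$ are not governed by combinatorics alone: whether a family of pairwise non-orthogonal vectors spans $\c{d_{1}}$ is a non-generic condition that the trace argument only just detects, so even a purely graph-theoretic reformulation --- record $i\sim_{a}j$ whenever $\bk{a_{i}}{a_{j}}=0$, observe that the product-basis property makes the complement of $\sim_{a}$ a subgraph of $\sim_{b}$, and feed in the frame identity $\sum_{i}\kb{a_{i}}{a_{i}}=d_{2}\,\mathbb{1}_{d_{1}}$ together with the dimension bounds on orthogonal families --- would still have to import this geometric input. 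I expect that carrying either route past $d_{1}=3$ requires a genuinely new ingredient.
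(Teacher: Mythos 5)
The paper does not prove this statement: it is posed as an open conjecture in the concluding section, with no argument offered for either direction, so there is no proof of record to compare yours against. Your assessment of where the difficulty lies in the forward direction --- the failure of the Step-2 trace identities of Lemma~\ref{lem:orthog_tiple} to close for $d_{1}\geq 4$, and the rigidity needed to peel off an orthonormal frame and factorise the remainder --- is consistent with the paper's own remarks surrounding Conjecture~\ref{conj:maximal_sets}, and your peeling/induction plan is a sensible, if unexecuted, strategy for the ``only if'' half.

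However, the one piece you do claim to prove, the converse, is wrong. Two distinct pairs $(i_{1},i_{2})\neq(j_{1},j_{2})$ in $\{1\ldots d_{1}\}\times\{1\ldots d_{2}\}$ need not agree in either coordinate, so your case split does not cover all pairs of indices. The converse in fact fails even under your joint-labelling hypothesis. Take $d_{1}=d_{2}=2$ and assign $(1,1)\mapsto\ket{0}\otimes\ket{0}$, $(2,1)\mapsto\ket{1}\otimes\ket{+}$, $(1,2)\mapsto\ket{+}\otimes\ket{1}$, $(2,2)\mapsto\ket{-}\otimes\ket{-}$. For each fixed $i_{2}$ the first factors form an orthonormal basis of $\mathbb{C}^{2}$ (namely $\{\ket{0},\ket{1}\}$ and $\{\ket{+},\ket{-}\}$), and for each fixed $i_{1}$ the second factors do as well, so your compatibility condition is satisfied; yet $|\bk{0,0}{-,-}|^{2}=1/4\neq 0$, so the four product vectors are not an orthonormal basis. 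The ``if'' direction of the conjecture therefore requires a strictly stronger compatibility condition --- one that guarantees orthogonality in at least one factor for \emph{every} pair of indices, including those differing in both labels --- and any eventual proof of the equivalence must build that into the statement of the grouping rather than into the bare existence of the two partitions.
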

Future progress towards a solution of the existence problem of MU
bases in non-prime power dimensions might take a twisted route involving
mutually unbiased product bases.

\subsubsection*{Acknowledgments}

D. M. acknowledges the support of the Operational Program Education
for Competitiveness Project No. CZ.1.07/2.3.00/30.0041 co-financed
by the European Social Fund and the Czech Ministry of Education. D. M. also acknowledges support by the Institute for Information \& communications Technology Promotion(IITP) grant funded by the Korea government(MSIP) (No.R0190-15-2028, PSQKD). B. P. would like to thank Anton Zeilinger for helpful discussions.

\end{document}